  \setlist[enumerate,1]{leftmargin=25pt}
  \setlist[itemize,1]{leftmargin=12pt}
  \setlist[description,1]{leftmargin=15pt}
  \newcommand{\ew}[1][-1]{\ar @{} [0,#1]}
  \newcommand{\emeasure}[1]{*+[F-:<.9em>]{#1} \ew}
  \newcommand\ms{\mathsmaller}
\newtheorem{theorem}{Theorem}[section]
\newtheorem{corollary}[theorem]{Corollary}
\newtheorem{lemma}[theorem]{Lemma}
\newtheorem{proposition}[theorem]{Proposition}
\theoremstyle{definition}
\newtheorem{convention}[theorem]{Convention}
\newtheorem{constraint}[theorem]{Constraint}
\newtheorem{definition}[theorem]{Definition}
\newtheorem{proviso}[theorem]{Proviso}
\newtheorem{remark}[theorem]{Remark}
\newtheorem{requirement}[theorem]{Requirement}
\newenvironment{lquote}
  {\list{}{\leftmargin=1.5em\rightmargin=1em}\item[]}%
  {\endlist}
\newcommand\B{\ensuremath{\mathcal B}}
\newcommand\bigox{\ensuremath{\bigotimes}}
\newcommand\bra[1]{\ensuremath{\langle#1|}}
\newcommand\braket[2]{\ensuremath{\langle#1\,|\,#2\rangle}}
\newcommand\C{\ensuremath{\mathcal C}}
\newcommand\Co{\ensuremath{\mathbb C}}
\newcommand\D{\ensuremath{\mathcal D}}
\newcommand\DO{\ensuremath{\mathrm{DO}}}
\newcommand{\dg}{^\dag}
\newcommand\Entries{\ensuremath{\textrm{Entries}}}
\newcommand\Exits{\ensuremath{\textrm{Exits}}}
\newcommand\Gsigma{\ensuremath{\overset{\sigma}{G}}}
\renewcommand\H{\ensuremath{\mathcal H}}
\newcommand\Id{\ensuremath{\mathbb 1}}
\newcommand\Inputs{\ensuremath{\textrm{Inputs}}}
\newcommand\iset[1]{\ensuremath{\big\langle#1\big\rangle}}
\newcommand\K{\ensuremath{\mathcal K}}
\newcommand\ket[1]{\ensuremath{|#1\rangle}}
\newcommand\ketbra[2]{\ensuremath{|#1\rangle\langle#2|}}
\newcommand\M{\ensuremath{\mathfrak M}}
\newcommand\Outputs{\ensuremath{\textrm{Outputs}}}
\newcommand\ox{\ensuremath{\otimes}}
\renewcommand\phi{\varphi}
\newcommand\qef{\hfill$\triangleleft$} 
\newcommand\qefhere{\tag*{$\triangleleft$}} 
\renewcommand\r{\mathop{\restriction}}
\newcommand\set[1]{\ensuremath{\{#1\}}}
\newcommand\T{\ensuremath{\mathrm{Tracks}}}
\newcommand\Tr{\ensuremath{\mathrm{Tr}}}
\newcommand\x{\ensuremath{\times}}
\title[]
      {Quantum circuits with classical channels\\
      and the principle of deferred measurements}
\author[]{Yuri Gurevich}
\address{Computer Science and Engineering\\
University of Michigan\\
Ann Arbor, MI  48109, U.S.A}
\email{gurevich@umich.edu}
\thanks{Partially supported by the US Army Research Office under W911NF-20-1-0297}
\author[]{Andreas Blass}
\address{Mathematics Department\\
University of Michigan\\
Ann Arbor, MI 48109, U.S.A.}
\email{ablass@umich.edu}
\begin{document}

\begin{abstract}
We define syntax and semantics of quantum circuits, allowing measurement gates and classical channels.
We define circuit-based quantum algorithms and prove that, semantically, any such algorithm is equivalent to a single measurement that depends only on the underlying quantum circuit.
Finally, we use our formalization of quantum circuits to state precisely and prove the principle of deferred measurements.
\end{abstract}

\maketitle
\thispagestyle{empty}

\section{Introduction} 
\label{sec:intro}

Quantum circuits play a central role in quantum computing. ``In this book,'' states the most popular textbook in the area, ``the term `quantum computer' is synonymous with the quantum circuit model of computation'' \cite[\S4.6]{NC}.

But what are quantum circuits exactly?
According to Wikipedia's Quantum Circuit page, ``A quantum circuit is a model for quantum computation in which a computation is a sequence of quantum gates, which are reversible transformations \dots'' \cite{WikiQC}.
Wikipedia may not be authoritative, but it is popular, and its reversibility claim echoes similar claims in the professional literature.
``Any $w$-qubit quantum circuit,'' according to \cite[p.~146]{De Vos}, ``is represented by a $2^w\x 2^w$ unitary matrix,'' and thus is reversible.
``Since quantum circuits are reversible, \dots'' is unreservedly stated in \cite[p.~X]{Al-Rabadi}.

Indeed, typical quantum gates are reversible.
But there are also measurement gates; see Figure~\ref{fig:cnot} for example.
In fact, measurement gates play an ever bigger role in quantum computing.

\begin{figure}[H]
\hspace*{15pt}
\Qcircuit @C=1em @R=.45em {
  &\ew &\ew &\ew &\ew &\ew &\ew
  &\emeasure{\textit{\small q=1}}\cwx[1] \\
\lstick{\ket{c}}
  &\qw &\multimeasureD{1}{\textit{\small p:=PM}} &\qw
  &\qw &\qw &\qw &\gate{Z} \\
\lstick{\ket0}
  &\gate{H} &\ghost{\textit{\small p:=PM}} &\gate{H}
  &\multimeasureD{1}{\textit{\small q:=PM}} &\gate{H}
  &\measureD{\textit{\small r:=SM}} \\
\lstick{\ket{t}}
  &\qw &\qw &\gate{H}
  &\ghost{\textit{\small q:=PM}} &\gate{H} &\qw &\gate{X} &\gate{Z}\\
  &\ew &\ew &\ew &\ew &\ew &\ew
  &\emeasure{\textit{\footnotesize p$\oplus$r=1}}\cwx[-1]
  &\emeasure{\textit{\footnotesize q=p$\oplus$r=1}}\cwx[-1]
}
\caption{\small A circuit for computing Controlled-NOT
(a slight modification of a figure from \cite{ZBL}).
PM is the qubit-parity measurement, and SM is the measurement in the standard basis.
Implicit classical channels connect each measurement with the equations where it is used.
The ancilla and the garbage to be discarded are shown on the middle line.}\label{fig:cnot}
\end{figure}
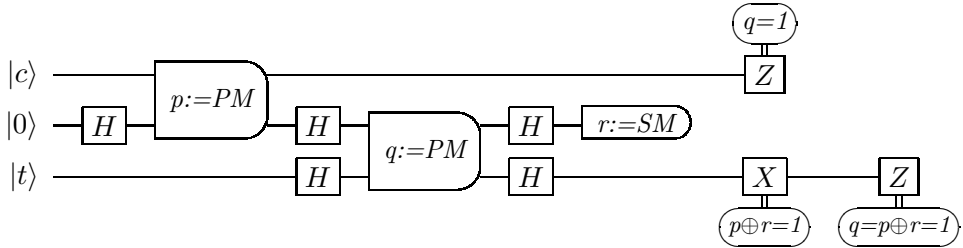

\begin{description}
\item[Question 1] What are quantum circuits exactly?
What are circuit-based quantum algorithms exactly?
\end{description}

To motivate another question, recall that a Boolean circuit with $m$ inputs and $n$ outputs computes a function of type $\set{0,1}^m \to \set{0,1}^n$, and that a general quantum circuit may have measurement gates and classical channels.
\begin{description}
\item[Question 2] What does a general quantum circuit compute?
\end{description}

Our favorite textbook on quantum computing is \cite{NC}, and in this paper we will use it as our main reference on quantum computing.
An attentive reader of \cite{NC} gets a good understanding of quantum circuits.
But even that textbook does not answer the two questions.

There is some formalization work to be done.
To begin with, it is helpful to separate syntax (circuit diagrams) from semantics (in Hilbert spaces).
The current lack of separation resembles to us the situation in classical logic before Tarski's definition of truth \cite{Tarski}.
As logicians working in quantum computing, we felt that it is our duty to develop precise definitions and analyze them.
That is what we do in this paper, though we simplified our task a little by adopting a common restriction to qubit-based circuits.


But what are such precise definitions good for?
First of all, they are useful to learners of quantum computing.
We know this from our own experience as we were such learners only a few years ago, and later one of us taught quantum computing to computer engineers.

Second, precise definitions facilitate proving general properties of quantum circuits. We prove for example that, semantically, every circuit-based quantum algorithm is equivalent to a single measurement that depends only on the underlying circuit.

Third, formalization compels careful examination of foundational issues.
One such issue is the Principle of Deferred Measurements (PDM) according to which every quantum circuit can be transformed so that no unitary gate has a measurement gate as a prerequisite.

The PDM is widely used to justify restricting attention to measurement-deferred circuits.
For the circuits free of classical channels the PDM is proved in paper \cite{AKN} which pioneered the whole issue.
We don't know any proof of a more general version of the PDM in the literature.
In fact, we don't know of a precise formulation of a more general version of the PDM in the literature.

It pains us to criticize our favorite textbook \cite{NC} on quantum computing, but the formulation of the PDM there is rather poetic
\cite[\S4.4]{NC}.
In various forms, that formulation is restated in the literature and used as if it were a proper theorem, e.g. \cite{BW,GC,JST,Tao+5,WikiDM}.

In \S\ref{sec:dm}, we formulate precisely and prove the principle of deferred measurements.

Finally, we need to mention a large body of sophisticated work on space-bounded quantum computations \cite{FR,GR,MW,PJ,Watrous} employing specialized quantum Turing machines and addressing the issues like eliminating intermediate measurements using pseudorandom number generators. This literature is primarily on structural computational complexity. As far as we can judge, it does not offer practical algorithms for the general PDM problem.

\section{Preliminaries}\label{sec:prelim}

By default, in this paper, Hilbert spaces are finite dimensional.
We take density operators on Hilbert space \H\ to be nonzero, positive semidefinite, Hermitian, linear operators on \H.
A density operator $\rho$ is \emph{normalized} if its trace is 1. We use possibly non-normalized density operators to represent (mixed) states in \H. A general density operator $\rho$ represents the same state as its normalized version $\rho/\Tr(\rho)$.

As in \cite[\S2.2.3]{NC}, a (quantum) \emph{measurement} $M$ on a Hilbert space $\H$ is an indexed family \iset{A_i: i\in I} of linear operators on \H\ where $\sum_{i\in I} A_i^\dag A_i$ is the identity operator $\Id_\H$ on \H.
The index set $I$ is the set of possible (classical) outcomes of $M$. (If there is only one possible outcome, then the unique operator is necessarily unitary.)

More generally a \emph{measurement} $M$ from a Hilbert state \H\ to a Hilbert space $\H'$ is an indexed family \iset{A_i: i\in I} of linear transformations $A_i: \H\to \H'$ such that $\sum_{i\in I} A_i^\dag A_i = \Id_{\H}$.
If the measurement $M$ is performed in state $\rho$,
the probability of the outcome $i$ is $\Tr(A_i\rho A_i\dg) / \Tr(\rho)$.
And if the outcome is $i$, then the post-measurement state is (represented by) $A_i\rho A_i\dg$ in $\H'$.

\begin{convention}\label{cnv:renorm}
The density operator $A_i \rho A_i\dg$ may not have trace 1, even if $\rho$ does, but this density operator is most convenient for our purposes. While we allow any positive scalar multiple of $A_i \rho A_i\dg$ to represent the same state,
usually we will represent the post-measurement state as $A_i\rho A_i\dg$ where $\rho$ represents the pre-measurement state. \qef
\end{convention}

\begin{convention}\label{cnv:counts}
Let $\H,\K$ be Hilbert spaces.
A linear operator $A$ on \H\ counts also as a linear operator on $\H\ox\K$, being tacitly identified with $A\ox\Id_\K$. Accordingly, a measurement $M$ on \H\ with index set $I$ and linear operators $A_i$ on \H\ counts as a measurement on $\H\ox\K$ with the same index set $I$ and with linear operators $A\ox\Id_\K$ on $\H\ox\K$. \qef
\end{convention}


\begin{remark}
An analogous convention can be formulated for operators $A:\H\to\H'$ and measurements $M$ from \H\ to $\H'$.
The notation, however, is getting more complicated.
In many cases and certainly in the cases of interest to us in this paper, the Hilbert spaces \H\ and $\H'$ are isomorphic and can be identified along an appropriate isomorphism.
This avoids inessential detail and simplifies exposition.
\end{remark}

\section{Syntax} 
\label{sec:syntax}

We describe syntactic circuits underlying quantum circuits that work with qubits. That is, every input node produces a single qubit, every output node consumes a single qubit, and every producer (an input node or gate exit) transmits to its consumer (a gate entry or output node) a single qubit.

\begin{definition}\label{def:syncir}
A \emph{syntactic circuit} consists of the following components.
\begin{enumerate}
\item Disjoint finite sets of \emph{input nodes}, \emph{output nodes}, and \emph{gates}.
\item For each gate $G$, two disjoint nonempty finite sets of the same cardinality, the set $\Entries(G)$ of the \emph{entries} of $G$ and the set $\Exits(G)$ of the \emph{exits} of $G$.\\
    The sets associated with any gate are disjoint from those associated with any other gate and from the sets in (1).
    The input nodes and gate exits will be called \emph{producers}. The gate entries and output nodes will be called \emph{consumers}.
\item A one-to-one function Bind from the set of producers onto the set of consumers. If an exit $x$ of gate $G_1$ is bound to an entry $y$ of gate $G_2$ (i.e., Bind$(x)=y$), we say that $G_1$ is a \emph{quantum source} for $G_2$ and write $G_1\prec_q G_2$.
\item A binary relation on the gates called the \emph{classical source relation} and denoted $G_1\prec_c G_2$.
\end{enumerate}
It is required that the following \emph{source} relation on gates
\[ G_1 \prec G_2\quad\text{if}\quad
  G_1\prec_q G_2\ \lor\ G_1 \prec_c G_2 \]
be acyclic. \qef
\end{definition}

A gate $G_1$ is a \emph{prerequisite} for gate $G_2$ if $G_1 \prec^* G_2$ where $\prec^*$ is the transitive closure of the source relation $\prec$.
Since $\prec$ is acyclic, so is $\prec^*$.

View each relationship $G_1 \prec_c G_2$ as a \emph{channel} from gate $G_1$ to gate $G_2$.
If $G$ has at least one incoming channel, then $G$ is a \emph{classically controlled} gate, in short a CC gate; otherwise $G$ is non-CC gate.

We presumed that Bind is defined on all producers. This situation is similar to that with Boolean circuits: unbound producers may be made bound by providing additional output nodes.

But, in a sense, syntactic circuits underlying Boolean circuits are more general \cite{G242,G244}. The definition above reflects two special aspects of quantum circuits which go beyond syntax.
\begin{itemize}
\item In Boolean circuits, information may flow from one producer to multiple consumers. In quantum circuits, by the no-cloning theorem of quantum theory \cite[\S12.1.1]{NC}, a producer's (quantum) output cannot generally be duplicated to supply multiple consumers. Hence the requirement that Bind be a function.
    (As in Boolean circuits, every consumer needs exactly one producer, so Bind is a bijection.)
\item A Boolean gate often has more entries than exits; think of a conjunction gate for example. It may also have more exits than entries. A quantum gate, without loss of generality (see \S\ref{sec:q} in this connection), transforms one state of a quantum system to another state of the same quantum system. Since we work with qubits-to-qubits gates, the number of entry qubits is equal to the number of exit qubits.
\end{itemize}

Consider a syntactic circuit $\C$.

\begin{definition}\label{def:stage}
A \emph{stage} (suggesting a stage of a computation) of \C\ is a set $S$ of gates closed under prerequisites, so that $F\prec^* G\in S$ implies $F\in S$.
A gate $G$ is \emph{S-ready} (suggesting that it is ready to fire at stage $S$) if all its prerequisites are in $S$ but $G\not\in S$.
The \emph{exits} of $S$ are the producers $x$ such that $x$ is an input node or an exit of an $S$-gate and the consumer $\mathrm{Bind}(x)$ is an output node or an entry of a gate outside of $S$. \qef
\end{definition}

\begin{lemma}\label{lem:stage}
Let $n$ be the number of input nodes of \C. Then any stage $S$ of \C\ has exactly $n$ exits, and \C\ has exactly $n$ output nodes.
\end{lemma}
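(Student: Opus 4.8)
The plan is to prove the first claim by a counting argument that balances producers against consumers relative to the stage, and then to obtain the second claim as the special case where $S$ is the set of all gates.

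Fix a stage $S$. First I would collect the producers that ``belong to'' $S$: let $P_S$ consist of all input nodes together with all exits of gates in $S$, so that $|P_S| = n + \sum_{G \in S}|\Exits(G)|$. Since $\Bind$ is a bijection, it is in particular injective on $P_S$, so counting the exits of $S$ reduces to sorting the producers in $P_S$ by the kind of consumer they are bound to. Each $x \in P_S$ has $\Bind(x)$ equal to an output node, an entry of a gate in $S$, or an entry of a gate outside $S$; by Definition~\ref{def:stage} the first and third cases together are exactly the exits of $S$, while the middle case forms a set I will call $E_S$. Thus $|\text{exits of }S| = |P_S| - |E_S|$, and it remains to show $|E_S| = \sum_{G \in S}|\Exits(G)|$.

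The key step is to show that $\Bind$ restricts to a bijection from $E_S$ onto the set of all entries of gates in $S$. One direction is immediate from the definition of $E_S$. For the other, take any entry $y$ of a gate $G_2 \in S$; being a consumer, $y = \Bind(x)$ for a unique producer $x$, and I must verify $x \in P_S$. If $x$ is an input node this is clear; otherwise $x$ is an exit of some gate $G_1$, whence $G_1 \prec_q G_2$ and so $G_1 \prec^* G_2 \in S$. Because $S$ is closed under prerequisites, $G_1 \in S$, so $x$ is an exit of an $S$-gate and hence lies in $E_S$. This is the crux of the argument and the only place the closure property of stages is invoked. Given the bijection, $|E_S| = \sum_{G \in S}|\Entries(G)|$, which equals $\sum_{G \in S}|\Exits(G)|$ since entries and exits of each gate have equal cardinality; substituting, the sums cancel and leave exactly $n$ exits.

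For the second claim I would instantiate $S$ as the set of all gates, which is vacuously closed under prerequisites and hence a stage. With no gate outside $S$, the exits of $S$ are precisely the producers bound to output nodes; since $\Bind$ is a bijection onto the consumers and every output node is a consumer, these producers are in bijection with the output nodes. Hence the number of output nodes equals the number of exits of this stage, which by the first claim is $n$. I do not expect a serious obstacle, as the argument is purely combinatorial and rests only on the bijectivity of $\Bind$, the per-gate balance $|\Entries(G)| = |\Exits(G)|$, and prerequisite-closure; the one point demanding care is the ``other direction'' of the bijection above, where the rule that $G_1 \prec_q G_2$ holds whenever an exit of $G_1$ is bound to an entry of $G_2$ must be combined with closure to force the feeding producer of an $S$-entry to already lie in $P_S$.
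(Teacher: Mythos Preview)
Your argument is correct. It differs from the paper's proof in its overall structure: the paper proceeds by induction on $|S|$, removing from $S$ a gate $G$ that is not a prerequisite of any other $S$-gate (such a $G$ exists by acyclicity), invoking the induction hypothesis for $V=S\setminus\{G\}$, and then observing that passing from $V$ to $S$ trades the $r$ producers feeding $\Entries(G)$ for the $r$ new exits of $G$. Your approach instead does a single global count, packaging the same combinatorics into the bijection $\Bind\colon E_S \to \bigcup_{G\in S}\Entries(G)$ and the per-gate balance $|\Entries(G)|=|\Exits(G)|$. The two arguments use the closure property of stages at the same pivotal moment (to force the feeder of an $S$-entry back into $S$); your version simply avoids the bookkeeping of checking that $V$ is again a stage and that the ``lose $r$, gain $r$'' description is exhaustive, at the cost of setting up the sets $P_S$ and $E_S$ explicitly. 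Both derive the second claim by specializing to the stage consisting of all gates.
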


\begin{proof}
The second claim follows from the first when $S$ contains all the gates.
We prove the first claim by induction on the cardinality $|S|$ of $S$. The case $|S|=0$ is obvious; the exits of $S$ are exactly the input nodes.
Suppose that $|S|>0$. By the acyclicity of the prerequisite relation, there is a gate $G\in S$ which isn't a prerequisite of any $S$ gate. By the induction hypothesis, the stage $V = S - \{G\}$ has exactly $n$ exits. Let $r$ be the arity of $G$.
When we add $G$ to $V$, the set of exits loses the $r$ producers for the entries in $G$ and gains the $r$ exits of $G$. Hence $S$ has exactly $n$ exits.
\end{proof}

\begin{corollary}\label{cor:truncate}
Any stage $S$ of a syntactic circuit gives rise to a syntactic circuit in its own right, a truncated version of the original circuit, with the gates of $S$, with the original input nodes and with the consumers of the exits of $S$ playing the role of output nodes.
\end{corollary}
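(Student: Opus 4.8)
The plan is to verify directly that the described data satisfy all four clauses of Definition~\ref{def:syncir}. I take as input nodes those of \C, as gates the elements of $S$ together with their original entries and exits, as output nodes the consumers $\Bind(x)$ of the exits $x$ of $S$, and as classical source relation the restriction of $\prec_c$ to $S$. The producers of this truncated circuit $\C_S$ are then the input nodes of \C\ together with the exits of the $S$-gates, and its consumers are the entries of the $S$-gates together with the new output nodes. For the binding I would simply restrict the original $\Bind$ to the producers of $\C_S$.

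First I would check that this restricted $\Bind$ actually lands in the consumers of $\C_S$. Let $x$ be a producer of $\C_S$. Either $\Bind(x)$ is an entry of an $S$-gate, in which case it is an internal consumer of $\C_S$; or $\Bind(x)$ is an output node of \C\ or an entry of a gate outside $S$, in which case $x$ meets the definition (Definition~\ref{def:stage}) of an exit of $S$, so $\Bind(x)$ is by construction one of the new output nodes. Either way the image is a consumer of $\C_S$.

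The main point is that the restricted map is a bijection onto the consumers of $\C_S$. Injectivity is inherited from the original $\Bind$. For surjectivity, the new output nodes are hit by definition, so the crux is that every entry $y$ of an $S$-gate $G$ has a preimage that survives into $\C_S$. Writing $y = \Bind(x)$ for the unique original producer $x$: if $x$ is an input node it is already a producer of $\C_S$; and if $x$ is an exit of a gate $G'$, then $G' \prec_q G$, hence $G' \prec G$ and so $G'$ is a prerequisite of $G \in S$. This is where the closure of the stage under prerequisites is essential: it forces $G' \in S$, so $x$ is again a producer of $\C_S$. Thus the restriction is a bijection from the producers of $\C_S$ onto its consumers, giving clause~(3).

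It then remains to dispatch the bookkeeping. The disjointness requirements in clauses~(1) and~(2) are inherited, since the input nodes, the gates, and their entry/exit sets are unchanged, while the new output nodes form a subset of the original consumers and are therefore distinct from one another (by injectivity of $\Bind$) and disjoint from the producers, from the gates, and from the entry/exit sets of the $S$-gates. Acyclicity is immediate because the source relation of $\C_S$ is a subrelation of the acyclic $\prec$. The hardest part is really just the surjectivity argument above, together with the accompanying care in identifying exactly which objects are the producers and consumers of $\C_S$; everything else is routine inheritance from \C. Lemma~\ref{lem:stage} finally guarantees that $\C_S$ has as many output nodes as \C\ has input nodes, matching the count one expects.
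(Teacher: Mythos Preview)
Your verification is correct; the key step---using closure of $S$ under prerequisites to show that the producer feeding an entry of an $S$-gate is itself a producer of $\C_S$---is exactly what makes the restricted $\Bind$ surjective, and the remaining disjointness and acyclicity checks are routine as you say. The paper states this corollary without proof, leaving the verification implicit as an immediate consequence of the definitions and of Lemma~\ref{lem:stage}; your write-up simply spells out what the authors left to the reader.
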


\section{Semantics}
\label{sec:q}

The computation model of Boolean circuits is well known. Every Boolean circuit computes a function of type $\set{0,1}^m \to \set{0,1}^n$.
In this section, we define the computation model of quantum circuits.
In particular, we answer the question: What exactly does a quantum circuit compute?

If \H\ is a Hilbert space and $I$ a nonempty set, then $\H^{\ox I}$ is the tensor product $\bigox_{i\in I} \H_i$ where each $\H_i$ is $\H$.

The following convention would allow us a more uniform view of quantum circuit gates.

\begin{convention}\label{cnv:umeas}
A unitary operator $U$ is identified with a measurement with a single outcome whose only operator is $U$.
\end{convention}

\begin{definition}\label{def:qcrct}
A \emph{quantum circuit} \C\ is a syntactic circuit $\C_0$ together with the following assignments.
\begin{enumerate}
\item Each non-CC gate $G$ is assigned a single measurement $M(G)$, called the \emph{$G$-measurement}, consisting of linear transformations from $(\Co^2)^{\ox\Entries(G)}$ to $(\Co^2)^{\ox\Exits(G)}$.
        An outcome of $M(G)$ is a \emph{$G$-outcome}, and $O(G)$ is the set of $G$-outcomes.
\item Each CC gate is assigned a finite set of measurements, called \emph{$G$-measurements}, with disjoint index sets, consisting of linear transformations from $(\Co^2)^{\ox\Entries(G)}$ to $(\Co^2)^{\ox\Exits(G)}$.
    An outcome of any of the $G$-measurements is a \emph{$G$-outcome}, and $O(G)$ is the set of $G$-outcomes.
    In addition, $G$ is assigned a \emph{selection} function $\sigma_G$ that, given outcomes \iset{f(G'): G'\prec_c G} of all classical sources of $G$, picks a $G$-measurement.
\end{enumerate}
A gate $G$ is \emph{unitary} if every $G$-measurement is unitary. \qef
\end{definition}

The disjointness requirement in clause~(2) is a convenience that simplifies notation. As a result, any $G$-outcome determines the $G$-measurement producing the outcome.

The selection function determines, at runtime, the $G$-measurement to be executed in accordance with information from the classical sources of $G$. Without loss of generality, we assume that every classical source $G'\prec_c G$ sends to $G$ the actual $G'$-outcome $f(G')$. In applications, only some $h(f(G'))$ may be sent. For example, if outcomes $f(G')$ are natural numbers, then only the parity of $f(G')$ might be sent. But that auxiliary function $h$ may be built into the selection function of $G$.

\begin{remark}\label{rem:umeas}
Convention~\ref{cnv:umeas} is convenient in the present section. It allows a more uniform treatment of gates. But it is just a convention, and it is not really necessary. It may be dropped, so that we do distinguish between a unitary transformation $U$ and the measurement with a single outcome whose only transformation is $U$.
Then a gate $G$ may be assigned one or several measurements, in which case it is a \emph{measurement gate}. Alternatively, it may be assigned one or several unitary transformations, in which case it is a \emph{unitary gate}. A unitary gate $G$ has no $G$-outcomes. If it is classically controlled, then its selection function $\sigma_G$ picks one of the assigned unitary transformations depending on the outcomes of the classical sources of $G$. As you read the rest of this section, you'll see that the other necessary changes are rather obvious. \qef
\end{remark}

\begin{proviso}\label{prv:local}
Quantum circuits are local physical systems devoid of long-distance communication. \qef
\end{proviso}

Notice that Hilbert spaces $(\Co^2)^{\ox\Entries(G)}$ and $(\Co^2)^{\ox\Exits(G)}$ are in general different.
The definition of a measurement on a Hilbert space naturally generalizes to that of a measurement from one Hilbert space to another; see \S\ref{sec:prelim}.
But working systematically with the more general definition would make our exposition notationally awkward. What can we do?

Well, $(\Co^2)^{\ox\Entries(G)}$ and $(\Co^2)^{\ox\Exits(G)}$ have the same dimension and therefore are isomorphic.
There is in general no canonical isomorphism from $(\Co^2)^{\ox\Entries(G)}$ to $(\Co^2)^{\ox\Exits(G)}$.
But every bijection from $\Entries(G)$ to $\Exits(G)$ engenders an isomorphism from $(\Co^2)^{\ox\Entries(G)}$ to $(\Co^2)^{\ox\Exits(G)}$.
Choosing such a bijection for every gate and linearly ordering the inputs of \C\ give rise to so-called registers, also known as wires or timelines, typically drawn horizontally in pictures like  Fig~\ref{fig:cnot}.
Such  registers are common%
\footnote{We experimented with formalizing quantum circuits with timelines in \cite{G242} and without timelines in \cite{G244}.}
in the literature. They are not necessary in this section but we will use them in \S\ref{sec:dm}.
For now, we just need to choose some isomorphisms, not necessarily induced by bijections. Accordingly, we adopt the following proviso.

\begin{proviso}\label{prv:choose}
To simplify our presentation, we will presume that a quantum circuit comes with an isomorphism $\eta_G: (\Co^2)^{\ox\Entries(G)} \to (\Co^2)^{\ox\Exits(G)}$ for every gate $G$.
Furthermore, for each gate $G$, we will identify the Hilbert spaces $(\Co^2)^{\ox\Entries(G)}$ and $(\Co^2)^{\ox\Exits(G)}$ along the isomorphism $\eta_G$. \qef
\end{proviso}

In the rest of this section, let \C\ be a quantum circuit and $\H = (\Co^2)^{\ox \Inputs(\C)}$. We extend the selector function $\sigma$ to non-CC gates $G$ in the only possible way: Given any outcomes \iset{f(G'): G' \prec_c G} of the classical sources of $G$ (namely, none, as $G$ has no classical sources $G'$), $\sigma_G$ picks the unique $G$-measurement $M(G)$.

\begin{definition}\label{def:track}
A \emph{track} for \C\ is a function $f$ that assigns to each gate $G$ a $G$-outcome $f(G)$ subject to the following \emph{coherence} requirement:
\[ \text{$f(G)$ is an outcome of the measurement}\quad \sigma_G\iset{f(G'): G' \prec_c G}. \qefhere \]
\end{definition}
As far as classically controlled unitary gates are concerned, the coherence means that the control works as intended.

A \emph{stage} of \C\ is a stage of the underlying syntactic circuit. Recall that a gate $G$ is ready at stage $S$ if all its prerequisites are in $S$ but $G\notin S$.

\begin{lemma}\label{lem:choose}
There is a unique way to associate with every stage $S$ of \C\ an isomorphism $\eta_S: \H \to (\Co^2)^{\ox \Exits(S)}$ such that $\eta_\emptyset = \Id_\H$ and,
if $G$ is $S$-ready, then $\eta_{S\cup\{G\}} = \eta_G \circ \eta_S$.
\end{lemma}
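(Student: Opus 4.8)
The plan is to define the family $\{\eta_S\}$ by recursion along ``building sequences'' of a stage and then to verify that the construction is consistent, since a stage can in general be assembled by adding ready gates in many different orders.

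\emph{The recursive step is legitimate.} First I would check that the prescribed formula produces a genuine isomorphism onto the correct space. Fix an $S$-ready gate $G$ and let $E_G$ be the set of those exits of $S$ whose consumers are the entries of $G$. Because $G$ is $S$-ready, all quantum sources of $G$ lie in $S$, so every producer bound to an entry of $G$ is indeed an exit of $S$; thus $E_G\subseteq\Exits(S)$, and $\mathrm{Bind}$ restricts to a bijection of $E_G$ onto $\Entries(G)$. No exit of $G$ can be bound to an entry of an $S$-gate (that would make $G$ a prerequisite of a member of $S$, forcing $G\in S$), so each exit of $G$ is an exit of $S\cup\{G\}$. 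Combining this with the bookkeeping in the proof of Lemma~\ref{lem:stage}, one obtains the disjoint decomposition $\Exits(S\cup\{G\}) = (\Exits(S)\setminus E_G)\sqcup\Exits(G)$. Reading $\eta_G$ through the identification $(\Co^2)^{\ox E_G}\cong(\Co^2)^{\ox\Entries(G)}$ and extending it by the identity on the remaining factors as in Convention~\ref{cnv:counts}, we get an isomorphism $(\Co^2)^{\ox\Exits(S)}\to(\Co^2)^{\ox\Exits(S\cup\{G\})}$; hence $\eta_G\circ\eta_S$ is an isomorphism $\H\to(\Co^2)^{\ox\Exits(S\cup\{G\})}$, as required. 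The base case is sound because $\Exits(\emptyset)=\Inputs(\C)$, so $(\Co^2)^{\ox\Exits(\emptyset)}=\H$ and $\eta_\emptyset=\Id_\H$ is well typed.

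\emph{Uniqueness.} I would order the gates of any stage $S$ as $G_1,\dots,G_k$ by a linear extension of the prerequisite order restricted to $S$; then each $G_j$ is $\{G_1,\dots,G_{j-1}\}$-ready, so the two defining conditions force $\eta_S=\eta_{G_k}\circ\cdots\circ\eta_{G_1}\circ\eta_\emptyset$. Thus at most one family satisfies the requirements, and it remains only to show that this composite does not depend on the chosen linear extension.

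\emph{Existence, via a commutation (diamond) argument.} Here I would invoke the standard fact that any two linear extensions of a finite poset are connected by a finite sequence of transpositions of two incomparable adjacent elements, and reduce to a single such swap. Suppose $G_1,\dots,G_k$ and its transpose at positions $i,i+1$ are both valid, with $G_i$ and $G_{i+1}$ incomparable under $\prec^*$. Writing $S'=\{G_1,\dots,G_{i-1}\}$, incomparability means neither gate is a prerequisite of the other, so both $G_i$ and $G_{i+1}$ are already $S'$-ready. The swap then changes nothing provided $\eta_{G_{i+1}}\circ\eta_{G_i}=\eta_{G_i}\circ\eta_{G_{i+1}}$ on $(\Co^2)^{\ox\Exits(S')}$. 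This commutation is the heart of the matter: the four producer sets $E_{G_i}$, $E_{G_{i+1}}$, $\Exits(G_i)$, $\Exits(G_{i+1})$ are pairwise disjoint---the first two because $\mathrm{Bind}$ is injective and $\Entries(G_i)$, $\Entries(G_{i+1})$ are disjoint, the latter two by Definition~\ref{def:syncir}, and the cross terms because the $E$'s lie in $\Exits(S')$ while the exit sets of $G_i,G_{i+1}$ do not. Hence, under Convention~\ref{cnv:counts}, the two extended isomorphisms act on disjoint tensor factors, so both orders of composition equal the single map that applies $\eta_{G_i}$ on the $E_{G_i}$ factors, $\eta_{G_{i+1}}$ on the $E_{G_{i+1}}$ factors, and the identity on the rest. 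This establishes the diamond, and with it the independence of the composite from the linear extension, completing the construction.

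The step I expect to be the main obstacle is precisely this commutation, because each $\eta_G$ not only transforms amplitudes but also relabels the tensor factors it acts on, replacing $E_G$ by $\Exits(G)$; making ``act on disjoint factors'' rigorous therefore requires careful tracking of the identifications packaged in Convention~\ref{cnv:counts} and Proviso~\ref{prv:choose}, rather than a one-line appeal to operators on registers with fixed labels.
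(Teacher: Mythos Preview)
Your argument is correct and is essentially the same approach as the paper's: define $\eta_S$ by induction on $|S|$. The paper's proof is the single line ``Induction on the cardinality of $S$,'' leaving all of the well-definedness checking implicit; what you have written is exactly the content that one-line proof sweeps under the rug, namely the verification that the inductive prescription $\eta_{S\cup\{G\}}=\eta_G\circ\eta_S$ does not depend on which $S$-ready gate $G$ one peels off, handled via the adjacent-transposition reduction and the commutation of $\eta_{G_i}$ and $\eta_{G_{i+1}}$ on disjoint tensor factors.
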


The composition $\eta_G \circ \eta_S$ makes sense because of Convention~\ref{cnv:counts}.

\begin{proof}
Induction on the cardinality of $S$.
\end{proof}


The lemma implies that the gates of the circuit \C\ work on tensor factors of \H; by Convention~\ref{cnv:counts}, they work on \H. Note that $(\Co^2)^{\ox \Outputs(\C)}$ is identified with \H\ via the isomorphism $\eta_{\C}$ of Lemma~\ref{lem:choose}.

We begin to address the question: What does our quantum circuit \C\ compute?

A stage $S$ of a quantum circuit \C\ determines and represents a possible stage in a computation of the circuit, where $S$ comprises the gates that have already fired.
A gate $G$ can be fired at stage $S$ if and only if $G$ is $S$-ready, so that $S$ contains all prerequisites of $G$ but not $G$ itself.
Notice that the $S$-ready gates form an antichain (in the $\prec^*$   ordering) which means that none of them is a prerequisite for another.
But if a stage $S$ is reached in a computation of \C, then perhaps only some of the $S$-ready gates fire.

\begin{definition}\label{def:qbout}
A \emph{gate bout} is a nonempty set of gates which form an antichain. \qef
\end{definition}

Think of a gate bout $B$ as a generalized gate. The entries of $B$ are the entries of all $B$-gates, and the exits of $B$ are the exits of all $B$-gates.

\begin{definition}
Let $B$ be a gate bout.
\begin{enumerate}
\item Every tensor product $\bigox_{G\in B} M_G$, where $M_G$ is a $G$-measurement, is a \emph{$B$-measurement}, and the outcomes of $\bigox_{G\in B} M_G$  are \emph{outcomes} of $B$.
    Thus the set of outcomes of $B$ is $O(B)= \prod_{G\in B} O(G)$.
    A gate $G'$ is a \emph{classical source} for $B$, symbolically, $G'\prec_c B$, if $G'$ is a classical source for some $B$-gate.
\item The \emph{selection} function $\sigma_B$ is the function that, given outcomes $f(G')$ of all classical sources $G'$ of $B$, picks the $B$-measurement
    \[ \bigox_{G\in B} \sigma_G\iset{f(G'): G'\prec_c G}. \qefhere \]
\end{enumerate}
\end{definition}

Any computation of a quantum circuit works in sequential time, step after step.
At each stage $S$ of the computation, a bout of $S$-ready gates is fired.
But which bout? Some decisions have to be made.
At each stage $S$, we decide which of the $S$-ready gates fire, and nature decides what measurement results will be produced.
To reflect our decisions, we introduce the following notion.

\begin{definition}\label{def:qalg}
A \emph{schedule} of a quantum circuit is a sequence
\[ X = (X_1; X_2; X_3; \dots; X_T) \]
  of gate bouts such that every gate set
\[ X_{\le t} =  \bigcup \{X_s: s\le t\} \]
is a stage, and $X_{\le T}$ contains all the gates.
(Notice that all the gates in $X_{t+1}$ are $(X_{\le t})$-ready.)
A quantum circuit \C\ equipped with a fixed schedule is a \emph{circuit algorithm} $\C_X$.
\qef
\end{definition}
\noindent
The intent is that the bouts $X_1, \dots, X_T$ are to be fired in that order.

We stipulate that an input for any circuit algorithm $\C_X$ is a possibly-mixed state in $\H = (\Co^2)^{\ox\Inputs(\C)}$ given by a density operator in $\DO(\H)$.

Given an input $\rho$ and schedule $X$, any computation of $\C_X$ on $\rho$ fires every gate and, in that sense, \emph{realizes} some track. And every track is realized in at most one
computation of $\C_X$.

\begin{definition}\label{def:qcum}
For any schedule $X = (X_1; X_2; X_3; \dots; X_T)$ for \C\ and any track $f\in\T(\C)$, the \emph{cumulative operator} is defined as
\[
C_X^f = A_T \circ A_{T-1} \circ \cdots \circ A_2 \circ A_1
\]
where each $A_t$ is the operator associated with the outcome $f\r X_t = \iset{f(G): G\in X_t}$ in the measurement $\sigma_{X_t}\iset{f(G'): G'\prec_c X_t}$.
Finally,
\[
\M(\C_X) = \iset{C_X^f: f\in \T(\C)}
\]
is the \emph{aggregate measurement} of $\C_X$. \qef
\end{definition}
It is easy to check that $\M(\C_X)$ is indeed a measurement.

\begin{theorem}[Reduction]\label{thm:qred}
Executing a circuit algorithm $\C_X$  on input  $\rho\in\DO(\H)$ and performing the aggregate measurement $\M(\C_X)$ in state $\rho$ have exactly the same effect.
More explicitly, for every track $f$ and input $\rho$ for \C, we have the following.
\begin{enumerate}
\item The probability that a computation of $\C_X$ realizes track $f$ is equal to the probability of outcome $f$ in the measurement $\M(\C_X)$.
\item If a computation of $\C_X$ realizes track $f$, then the resulting final state is $C_X^f\rho (C_X^f)\dg$.
\end{enumerate}
\end{theorem}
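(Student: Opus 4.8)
The plan is to track the (possibly non-normalized) state after each bout fires and reduce everything to a telescoping product. Put $\rho_0=\rho$ and, for $1\le t\le T$, let $\rho_t=A_t\rho_{t-1}A_t\dg$ with $A_t$ as in Definition~\ref{def:qcum}; writing $C_{\le t}=A_t\circ\cdots\circ A_1$ we then have $\rho_t=C_{\le t}\,\rho\,C_{\le t}\dg$, and in particular $\rho_T=C_X^f\rho(C_X^f)\dg$. The first point to nail down is that the measurement $\sigma_{X_t}\iset{f(G'): G'\prec_c X_t}$ fired at stage $t$, and hence the operator $A_t$, is determined by $f\r X_{\le t}$ alone. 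Indeed, every classical source $G'$ of a gate $G\in X_t$ satisfies $G'\prec_c G$, so $G'$ is a prerequisite of $G$; since the $X_t$-gates are $(X_{\le t-1})$-ready, all their prerequisites already lie in $X_{\le t-1}$. Thus no classical source of an $X_t$-gate lies in $X_t$ itself (consistent with $X_t$ being an antichain), the factors of $\sigma_{X_t}=\bigotimes_{G\in X_t}\sigma_G\iset{f(G'): G'\prec_c G}$ read their arguments from earlier bouts and do not interfere, and the whole sequence of measurements performed along a computation realizing $f$ is completely prescribed by $f$.

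I would then compute the probability that a computation of $\C_X$ realizes $f$ by induction on $T$ (equivalently, by the chain rule for this sequential process). A computation realizes $f$ exactly when, for each $t$ in turn, the bout measurement performed in the current state returns the outcome $f\r X_t$. By the measurement rule of \S\ref{sec:prelim}, conditioned on the pre-stage state being $\rho_{t-1}$, the probability of outcome $f\r X_t$ is $\Tr(A_t\rho_{t-1}A_t\dg)/\Tr(\rho_{t-1})=\Tr(\rho_t)/\Tr(\rho_{t-1})$ and the post-stage state is $\rho_t$. Multiplying these conditional probabilities over $t=1,\dots,T$ yields the telescoping product
\[
\prod_{t=1}^{T}\frac{\Tr(\rho_t)}{\Tr(\rho_{t-1})}=\frac{\Tr(\rho_T)}{\Tr(\rho_0)}=\frac{\Tr\!\big(C_X^f\rho(C_X^f)\dg\big)}{\Tr(\rho)}.
\]

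Applying the same measurement rule to the single measurement $\M(\C_X)=\iset{C_X^f: f\in\T(\C)}$ in state $\rho$, the right-hand side is precisely the probability of outcome $f$, which is claim~(1); and claim~(2) is immediate, since the inductive construction leaves a computation that realizes $f$ in the state $\rho_T=C_X^f\rho(C_X^f)\dg$. I expect the only real work to be the bookkeeping behind the first paragraph: verifying that the bout measurement $\bigotimes_{G\in X_t}M_G$ is a legitimate measurement on $\H$, that its factors act on the appropriate disjoint registers through the identifications of Lemma~\ref{lem:choose} and Convention~\ref{cnv:counts}, and that $A_t$ and $\rho_{t-1}$ do not depend on the order in which the mutually incomparable gates of a bout are resolved. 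Once this is pinned down, the remaining argument is just the elementary telescoping above together with the basic measurement rule.
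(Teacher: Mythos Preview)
Your proposal is correct and takes essentially the same approach as the paper: both arguments multiply the conditional probabilities $\Tr(A_t\rho_{t-1}A_t\dg)/\Tr(\rho_{t-1})$ and telescope to $\Tr(C_X^f\rho(C_X^f)\dg)/\Tr(\rho)$, and both read off the final state as $C_X^f\rho(C_X^f)\dg$ by successive application. Your first paragraph spells out more carefully than the paper does why the stage-$t$ measurement is determined by $f\r X_{\le t-1}$, but this is extra care rather than a different route.
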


\begin{proof}
Let $X = (X_1; X_2; X_3; \dots; X_T)$ and $A_1, \dots, A_T$ be as in Definition~\ref{def:qcum}.
The probability, according to quantum mechanics, that a computation of $\C_X$ on input $\rho$ realizes track $f$ is
\[
\frac{\Tr(A_1\rho A_1\dg)}{\Tr(\rho)} \cdot
  \frac{\Tr(A_2A_1\rho A_1\dg A_2\dg)}{\Tr(A_1\rho A_1\dg)}\cdot
  \frac{\Tr(A_3A_2A_1\rho A_1\dg A_2\dg A_3\dg)}
       {\Tr(A_2A_1\rho A_1\dg A_2\dg)} \cdots
= \frac{\Tr(C_X^f\rho (C_X^f)\dg)}{\Tr(\rho)}
\]
which is the probability of outcome $f$ in the measurement $\M(\C_X)$.

Suppose that a computation of $\C_X$ realizes $f$.
The computation successively applies $A_1$ to $\rho$, $A_2$ to $A_1\rho A_1\dg$, $A_3$ to $A_2A_1\rho A_1\dg A_2\dg$, \dots. The final state is $C_X^f \rho (C_X^f)\dg$.
\end{proof}

Our goal in the rest of this section is to show that the behavior of $\C_X$, as summarized in its aggregate measurement, depends only on the circuit \C, not on the schedule $X$, i.e. not on our choices of which ready gates to fire first.
To this end, call schedules $X,Y$ of \C\ \emph{equivalent} if $\C_X$ and $\C_Y$ have the same aggregate measurement.

\begin{lemma}\label{lem:qeq}
Let $X$ be a schedule $(X_1; \dots; X_T)$ of \C, and suppose that a bout $X_t$ is the disjoint union $B_1\sqcup B_2$ of bouts $B_1,B_2$, so that
\[ X = (X_1; \dots; X_{t-1}; B_1\sqcup B_2; X_{t+1}; \dots; X_T). \]
Then the schedule
\[ Y = (X_1; \dots; X_{t-1}; B_1; B_2; X_{t+1}; \dots; X_T)\]
is equivalent to $X$.
\end{lemma}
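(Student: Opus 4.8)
The plan is to show that $X$ and $Y$ index their aggregate measurements by the same set of tracks and that, track by track, the cumulative operators coincide; equality of the aggregate measurements then follows immediately from Definition~\ref{def:qcum}. First I would check that $Y$ really is a schedule. Since $X_t = B_1 \sqcup B_2$ with $B_1,B_2$ nonempty antichains (being bouts), both are legitimate gate bouts. For $s<t$ the partial union $Y_{\le s}$ equals $X_{\le s}$ and is a stage; moreover $Y_{\le t} = X_{\le t-1}\cup B_1$, while $Y_{\le t+1} = X_{\le t-1}\cup B_1\cup B_2 = X_{\le t}$, and the remaining partial unions again agree with those of $X$. Because every gate of $X_t$ is $(X_{\le t-1})$-ready, all prerequisites of each $G\in B_1$ already lie in $X_{\le t-1}$, so $X_{\le t-1}\cup B_1$ is closed under prerequisites and hence is a stage; the same readiness shows the gates of $B_2$ are $(X_{\le t-1}\cup B_1)$-ready. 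Thus $Y$ is a schedule with the same final stage as $X$. Since $\T(\C)$ depends only on \C\ (Definition~\ref{def:track} makes no reference to any schedule), the two aggregate measurements are indexed by the same set, so it remains to compare cumulative operators.

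Next, fix a track $f$. By Definition~\ref{def:qcum} the operators $A_s$ for $s\ne t$ occur literally unchanged in both $C_X^f$ and $C_Y^f$, and the only difference is that the single factor $A_t$ of $C_X^f$ is replaced in $C_Y^f$ by the composition $A_t^{B_2}\circ A_t^{B_1}$, where $A_t^{B_i}$ is the operator selected for outcome $f\r B_i$ ($B_1$ fired first, hence rightmost). So it suffices to prove $A_t = A_t^{B_2}\circ A_t^{B_1}$. Here the antichain property of $X_t$ is what keeps the two smaller selections well posed: if some $G_1\in B_1$ were a classical source of $G_2\in B_2$ then $G_1\prec_c G_2$, hence $G_1\prec^* G_2$, contradicting that $X_t$ is an antichain; likewise in the other direction. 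Consequently, for every $G\in X_t$ the $G$-measurement $\sigma_G\iset{f(G'):G'\prec_c G}$ selected inside $X_t$ is exactly the one selected inside whichever of $B_1,B_2$ contains $G$. By the definition of a bout measurement, the operator for outcome $f\r X_t$ factors as the tensor product $\bigox_{G\in X_t} A_G$ of the individual gate operators $A_G$ (the operator for $f(G)$ in that $G$-measurement), and likewise $A_t^{B_i} = \bigox_{G\in B_i} A_G$.

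The crux is then purely algebraic. Under Convention~\ref{cnv:counts} together with the isomorphisms $\eta$ of Lemma~\ref{lem:choose}, each $A_G$ acts on the tensor factor of \H\ carrying $G$'s qubits and as the identity on the complementary factor. Since $B_1$ and $B_2$ occupy disjoint factors, the operators $A_t^{B_1}$ and $A_t^{B_2}$ act on disjoint tensor factors and therefore commute, with $A_t^{B_2}\circ A_t^{B_1} = \bigl(\bigox_{G\in B_1}A_G\bigr)\ox\bigl(\bigox_{G\in B_2}A_G\bigr) = \bigox_{G\in X_t}A_G = A_t$. Hence $C_X^f = C_Y^f$ for every track $f$, so $\M(\C_X) = \M(\C_Y)$ and $X,Y$ are equivalent. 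I expect the main obstacle to be the bookkeeping in this last step: making precise, through Convention~\ref{cnv:counts} and the $\eta$-identifications, that regrouping a tensor product of disjoint single-gate operators into a $B_1$-block and a $B_2$-block and composing the blocks reproduces the full $X_t$-operator, independently of the order in which the two blocks are applied.
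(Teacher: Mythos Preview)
Your proposal is correct and follows essentially the same route as the paper: reduce to showing that the single bout operator $A_t$ equals the composition $A_t^{B_2}\circ A_t^{B_1}$, and then verify this via the identity $(L_1\ox\Id)\circ(\Id\ox L_2)=L_1\ox L_2$ for operators acting on disjoint tensor factors. The paper carries out that last step directly at the level of the two bout operators $L_1,L_2$ (without decomposing all the way down to the per-gate operators $A_G$) and omits the verification that $Y$ is a schedule and that the selection functions remain well posed, but these are exactly the bookkeeping details you supply, not a different argument.
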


\begin{proof}
We need to prove that $\C_X^f = \C_Y^f$ for every track $f\in \T(\C)$. So let $f$ be an arbitrary track for \C. Let
\[ \C_X^f = (A_T\circ\cdots\circ A_{t+1})\circ A_t \circ
            (A_{t-1}\circ\cdots\circ A_1) \]
as in Definition~\ref{def:qcum}. Further, let $B_0 = X_t = B_1\sqcup B_2$ and $L_0 = A_t$.
Then $L_0$ is the operator associated with the outcome $f\r B_0$ in the measurement $\sigma_{B_0}\iset{f(G'): G'\prec_c B_0}$.

It suffices to prove that $L_0 = L_2\circ L_1$ where, for $j\in \set{0,1,2}$, $L_j$ is the operator associated with the outcome $f\r B_j$ in the measurement $\sigma_{B_j}\iset{f(G'): G'\prec_c B_j}$.
Let $E_j = \Entries(B_j)$ and $\H_j = (\Co^2)^{E_j}$.
The equality $B_0 = B_1\sqcup B_2$ implies $E_0 = E_1 \sqcup E_2$ and therefore $\H_0 = \H_1 \ox \H_2$. Accordingly $L_0 = L_1\ox L_2$.

It remains to show that $L_2\circ L_1 = L_1\ox L_2$. We have
\begin{align*}
& (L_2\circ L_1)(x_1 \ox x_2)  = L_2(L_1(x_1\ox x_2))
 = L_2(L_1(x_1)\ox x_2)\\
& = L_1(x_1) \ox L_2(x_2) = (L_1 \ox L_2)(x_1 \ox x_2)
 = L_0(x_1\ox x_2). \qedhere
\end{align*}
\end{proof}

\begin{theorem}[Equivalence]\label{thm:qeq}
Every two schedules $X,Y$ over the same quantum circuit \C\ are equivalent.
\end{theorem}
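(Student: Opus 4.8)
The plan is to reduce both schedules to a common, maximally refined form and then connect these refined forms by elementary reorderings, invoking Lemma~\ref{lem:qeq} at every step. First I would note that equivalence of schedules is genuinely an equivalence relation on the schedules of \C: it is defined as equality of aggregate measurements, so reflexivity, symmetry, and transitivity are immediate. This licenses the chaining of equivalences below.

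The first reduction passes from an arbitrary schedule to a \emph{singleton schedule}, one in which every bout is a single gate. Given a schedule $X$, I repeatedly split each multi-gate bout as $X_t = \{G\} \sqcup (X_t \setminus \{G\})$. Such a split is legitimate: since $X_t$ is an antichain of $(X_{\le t-1})$-ready gates, every subset $B_1 \subseteq X_t$ has all its prerequisites inside $X_{\le t-1}$, so $X_{\le t-1} \cup B_1$ is again a stage, and Lemma~\ref{lem:qeq} applies. Iterating, $X$ is equivalent to a singleton schedule, which is nothing but an enumeration $g_1, \dots, g_N$ of the gates of \C\ that respects the prerequisite order, i.e.\ a linear extension of $\prec^*$.

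It then remains to show that any two singleton schedules are equivalent. Here I would use the standard combinatorial fact that any two linear extensions of a finite poset are connected by a finite sequence of transpositions, each swapping two elements that are adjacent in the current order and incomparable in the poset. (The usual argument bubbles the desired first element to the front: a minimal element can be moved leftward past each of its current left-neighbors, every such neighbor being necessarily incomparable to it, and one then recurses on the remaining gates.) So it suffices to prove that swapping two adjacent, $\prec^*$-incomparable gates $g_i, g_{i+1}$ preserves equivalence. This is where Lemma~\ref{lem:qeq} does the real work a second time. Because $g_i$ and $g_{i+1}$ are incomparable, both are ready at the stage $S = \{g_1, \dots, g_{i-1}\}$, so $\{g_i, g_{i+1}\}$ is a legitimate bout and the merged schedule $(g_1; \dots; g_{i-1}; \{g_i, g_{i+1}\}; g_{i+2}; \dots; g_N)$ is a valid schedule. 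Both $(\dots; g_i; g_{i+1}; \dots)$ and $(\dots; g_{i+1}; g_i; \dots)$ arise from this merged schedule by splitting its bout $\{g_i\}\sqcup\{g_{i+1}\}$ in the two possible orders, so by Lemma~\ref{lem:qeq} each is equivalent to the merged schedule, hence to the other.

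Chaining these equivalences—arbitrary schedule to singleton schedule, then along a path of adjacent transpositions joining any two singleton schedules—gives that $X$ and $Y$ induce the same cumulative operator $C_X^f = C_Y^f$ for every track $f$, i.e.\ the same aggregate measurement. The one genuinely non-mechanical ingredient is the combinatorial lemma connecting linear extensions by adjacent transpositions of incomparable elements; everything else is bookkeeping layered on top of Lemma~\ref{lem:qeq}. The conceptual reason the whole argument succeeds is that incomparable gates act on disjoint tensor factors and therefore commute—a fact already packaged for us inside the tensor-product identity $L_2 \circ L_1 = L_1 \ox L_2$ proved in Lemma~\ref{lem:qeq}.
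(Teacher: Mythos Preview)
Your proof is correct and follows essentially the same route as the paper: reduce every schedule to a linear (singleton) one by repeated applications of Lemma~\ref{lem:qeq}, then connect any two linear schedules by adjacent transpositions of incomparable gates, each transposition justified by merging and re-splitting via Lemma~\ref{lem:qeq}. The paper records the combinatorial fact about linear extensions separately as Theorem~\ref{thm:comb}, whereas you sketch it inline, but the argument is the same.
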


\begin{proof}
Let $n$ be the number of gates in \C.
Call a schedule $Y$ \emph{linear} if every bout of $Y$ contains a single gate, so that $Y$ can be identified with the sequence of gates $G_1, G_2, \dots, G_n$ in the order they appear in $Y$. Observe that a sequence of gates $G_1, G_2, \dots, G_n$ is a schedule if and only if it is \emph{coherent} in the sense that it respects the prerequisite relation: if $G_i\prec^* G_j$ then $i<j$.

First prove that every schedule $X = (X_1; X_2; \dots; X_T)$ is equivalent to a linear schedule. Induct on $n-T$. If $n-T = 0$, $X$ is already linear. Otherwise, split some non-singleton bout $X_t$ into the disjoint union of bouts $B_1$ and $B_2$, and then use Lemma~\ref{lem:qeq} and the induction hypothesis.

Second, prove that every two linear schedules are equivalent.
By Theorem~\ref{thm:comb} and the observation above, any linear schedule can be transformed to any other linear schedule by adjacent transpositions with all intermediate sequences being legitimate schedules.
Accordingly it remains to prove that linear schedules $X$ and $Y$ are equivalent if $Y$ is obtained from $X$ by one adjacent transposition.

To this end, let
\begin{align*}
X &= G_1, \dots, G_{t-1},\ G_t, G_{t+1},\ G_{t+2}, \dots, G_n\\
Y &= G_1, \dots, G_{t-1},\ G_{t+1}, G_t,\ G_{t+2}, \dots, G_n
\end{align*}
Neither $G_t$ nor $G_{t+1}$ is a prerequisite for the other. Indeed, if $G_t\prec^*G_{t+1}$ then $Y$ would be incoherent, and if $G_{t+1}\prec^*G_t$ then $X$ would be incoherent. Thus the set \set{G_t, G_{t+1}} is a bout. Let
\[Z = G_1, \dots, G_{t-1}, \set{G_t, G_{t+1}}, G_{t+2}, \dots, G_n. \]

By Lemma~\ref{lem:qeq}, $X$ is equivalent to $Z$, and $Y$ is equivalent to $Z$. Hence $X,Y$ are equivalent.
\end{proof}

Theorem~\ref{thm:qeq} justifies the following definition.

\begin{definition}
The \emph{aggregate measurement} $\M(\C)$ of a quantum circuit \C\ is the aggregate measurement of (any of) the \C-based circuit algorithms. \C\ \emph{computes} $\M(\C)$. \qef
\end{definition}

\begin{remark}
The aggregate measurement of a quantum circuit \C\ provides the semantics of \C.
The number of linear operators of the aggregate measurement may be exponential in the size of \C, but recall that the semantics of a Boolean circuit \B\ is given by a Boolean function, a truth table, which may be exponentially large in the number of \B\ gates. \qef
\end{remark}

\section{Deferring measurements}
\label{sec:dm}

In the book \cite{NC}, Nielsen and Chuang put forward the Principle of Deferred Measurement (PDM):
\begin{quoting}
``Measurements can always be moved from an intermediate stage of a quantum circuit to the end of the circuit;
if the measurement results are used at any stage of the circuit then the classically controlled operations can be replaced by conditional quantum operations" \cite[\S4.4]{NC}.
\end{quoting}
They obviously distinguish between measurements and conditional quantum operations, which must be unitary.
To be on the same page with them, we drop Convention~\ref{cnv:umeas} here.
See Remark~\ref{rem:umeas} in this connection.

To illustrate the PDM, Nielsen and Chuang transform the teleportation circuit
\begin{figure}[H]
\hspace{10pt}
\Qcircuit @C=2em @R=.2em {
\lstick{\ket\psi\quad }
&\ctrl{1} &\gate{H} &\measuretab{M} &\cw &\cctrl{2} \\
&\targ &\qw &\measuretab{N} & \cctrl{1} \\
&\qw &\qw &\qw &\gate{X^N}
  &\gate{Z^M} &\qw &\rstick{\ket\psi}
  \inputgroupv{2}{3}{.8em}{.8em}{\ket{\beta_{00}}\quad } }
\caption*{\small Figure 1.13 in \cite{NC}}
\end{figure}
\noindent
to the circuit
\begin{figure}[H]
\hspace{10pt}
\Qcircuit @C=2em @R=.2em {
\lstick{\ket\psi\quad }
&\ctrl{1} &\gate{H} &\qw &\ctrl{2} &\measuretab{M}\\
&\targ &\qw &\ctrl{1} &\qw &\measuretab{N} \\
&\qw &\qw &\gate{X^N} &\gate{Z^M} &\qw &\rstick{\ket\psi}
  \inputgroupv{2}{3}{.8em}{.8em}{\ket{\beta_{00}}\quad } }\\
\caption*{\small Figure 4.15 in \cite{NC}}
\end{figure}
\noindent
But, contrary to the first circuit, the second doesn't teleport a quantum state over a distance. Specifically, in the second circuit, \ket\psi\ is ``teleported'' only as far as the size of the controlled-$X$ and controlled-$Z$ gates; these gates must extend from Alice to Bob.

\noindent
\begin{quoting}
``Of course, some of the interpretation of this circuit as performing ‘teleportation’ is lost, because no classical information is transmitted from Alice to Bob, but it is clear that the overall action of the two quantum circuits is the same, which is the key point" \cite[\S4.4]{NC}.
\end{quoting}
The fact is that distant teleportation is impossible without classical transfer of information \cite[\S2.4.3]{NC}.
Under Proviso~\ref{prv:local}, quantum circuits are local physical systems devoid of long-distance communication. In the rest of this section, the proviso remains in force.

For simplicity, in this section, we work with registers (a.k.a.\ wires or timelines) described in \S\ref{sec:q}.

As indicated in the introduction, formulations of the PDM, similar to the one above, are found in the literature,
but we have not seen any formulation there which is more precise or explicit.

Let \C\ denote a given circuit. We are seeking a measurement-deferred version \D\ of \C, and we express ``deferring'' as follows.

\begin{requirement}[Deferral requirement]\label{req:d}
The deferred version \D\ satisfies

No unitary gate has a measurement gate as a prerequisite. \qef
\end{requirement}
\noindent
And it is desirable of course that the construction of \D\ from \C\ is feasible.

In the absence of classical channels, the PDM is established in \cite{AKN}.
The following precise but naive form of the PDM follows from our results in \S\ref{sec:q}.

\begin{proposition}\label{prp:DM}
Any quantum circuit \C\ can be transformed to an equivalent quantum circuit \D\ on the \C\ qubits such that \D\ satisfies the deferral requirement.
\end{proposition}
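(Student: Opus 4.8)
The plan is to lean entirely on the machinery of §\ref{sec:q}, which already reduces the whole behavior of a circuit to a single measurement. By Theorem~\ref{thm:qeq} the aggregate measurement $\M(\C) = \iset{C_X^f : f\in\T(\C)}$ is independent of the schedule, and by Lemma~\ref{lem:choose} it is a measurement on $\H = (\Co^2)^{\ox\Inputs(\C)}$, with $(\Co^2)^{\ox\Outputs(\C)}$ identified with $\H$ along $\eta_\C$. So I would take \D\ to be the circuit on the \C\ qubits whose single gate carries exactly this measurement, and then check the two demanded properties.

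Concretely, I would first build the underlying syntactic circuit of \D: retain the input nodes and output nodes of \C, introduce one gate $G$ with $\Entries(G)$ in bijection with $\Inputs(\C)$ and $\Exits(G)$ in bijection with $\Outputs(\C)$ (these have equal cardinality by Lemma~\ref{lem:stage}, so $G$ is a legitimate qubit-to-qubit gate), bind each input node to the corresponding entry of $G$ and each exit of $G$ to the corresponding output node, and take the classical source relation to be empty. Since $G$ is the only gate, $\prec$ is vacuously acyclic, so this is a genuine syntactic circuit. I would then promote it to a quantum circuit by declaring $G$ non-CC and assigning it the single $G$-measurement $M(G) = \M(\C)$, its operators running from $(\Co^2)^{\ox\Entries(G)}$ to $(\Co^2)^{\ox\Exits(G)}$ via the identifications of Proviso~\ref{prv:choose}.

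Next I would verify the two properties. For equivalence, \D\ admits only the schedule $X = (\set{G})$, and its cumulative operators are precisely the operators $C_X^f$ of $M(G)$, one per track, so $\M(\D) = \M(\C)$ directly from Definition~\ref{def:qcum}; hence \D\ computes the same aggregate measurement as \C. For Requirement~\ref{req:d}, note that $G$ is the only gate and has no prerequisites whatsoever, so no unitary gate has a measurement gate as a prerequisite and the deferral requirement holds vacuously (this is the one place where dropping Convention~\ref{cnv:umeas} matters, but it changes nothing: whether $\M(\C)$ has one outcome or many, $G$ has no measurement prerequisite).

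I expect no real obstacle in the argument, which is exactly why the proposition is flagged as naive. The only catch is feasibility: the lone gate $G$ inherits the entire operator list of $\M(\C)$, one operator per track, and the number of tracks can be exponential in the size of \C. The construction is therefore correct but collapses all of \C\ into a single, unmanageably large gate, contributing nothing toward a practical measurement-deferred circuit. Supplying a feasible deferral is the genuine work left for the remainder of the section; this proposition merely records that deferral is \emph{semantically} always available.
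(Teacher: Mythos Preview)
Your proposal is correct and matches the paper's own proof almost exactly: the paper also takes \D\ to be the single-gate circuit computing the aggregate measurement $\M(\C)$, observes that the deferral requirement holds trivially, and invokes the results of \S\ref{sec:q} for equivalence. Your write-up simply fills in the syntactic details (Lemma~\ref{lem:stage}, Proviso~\ref{prv:choose}) that the paper leaves implicit, and your closing remarks about infeasibility are precisely why the paper calls this form of the PDM ``naive.''
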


\begin{proof}
If \C\ satisfies the deferral requirement, set $\D = \C$.
Otherwise, let \D\ be the single-gate circuit that computes the aggregate measurement of \C.
In a trivial way, \D\ satisfies the deferral requirement.
By Theorem~\ref{thm:qred}, \C\ and \D\ are equivalent.
\end{proof}

While our formalization above arguably fits the informal PDM,
it is presumably not what Nielsen and Chuang (and other authors) intended.
But what did they intend?
This is not an easy question to answer, but let us make a couple of points expressing how we view the intent in question.
First, it seems that  Nielsen and Chuang restrict attention to quantum circuits satisfying the following constraint.

\begin{constraint}\label{cns:channel}
Every classical channel goes from a measurement gate to a unitary gate; there are no classically controlled measurement gates.
\end{constraint}

By Definition~\ref{def:track}, a track of a quantum circuit is an assignment of a $G$-outcome to every measurement gate $G$. Definition~\ref{def:track} imposes a coherence requirement on tracks, but Constraint~\ref{cns:channel} implies that all assignments are coherent.
In the rest of this section, by default, quantum circuits satisfy the constraint.

Second, the measurement deferral procedure is expected just to defer the measurements of \C\ but otherwise keep the structure of \C\ intact to the extent possible; see \cite[Exercise~4.35]{NC} in this connection. The following definition captures one aspect of that expectation.

\begin{definition}[Commensurate]\label{def:com}
Circuits \C\ and \D\ are \emph{commensurate} if there is a one-to-one correspondence $\zeta$ between the measurements of \C\ and those of \D\ such that, for every measurement $M$ in \C, the measurement $\zeta(M)$ has the outcomes of $M$ (and possibly some extra outcomes). Such a $\zeta$ is a \emph{commensuration correspondence}.
\end{definition}
\noindent
A commensuration correspondence $\zeta$ allows us to view a track $f$ of \C\ as a track of \D: $f(\zeta(M)) = f(M)$.

A question arises in what sense \C\ and \D\ are equivalent. The equivalence notion of \S\ref{sec:q} is too strong. Indeed, it takes an ancilla to defer the measurement in a circuit like
\[
\Qcircuit @C=1.6em @R=.75em {
&\meter & \cghost{U} & \qw \\
& \qw &\multigate{-1}{U} & \qw }
\]
(See the proof of Lemma~\ref{lem:dm} for how an ancilla is used.)
Accordingly, \C\ and \D\ may have different aggregate measurements. The following definition gives the most natural relation for the purpose of the PDM.

\begin{definition}\label{def:fs}
A circuit \C\ is \emph{faithfully simulated} by a circuit \D, symbolically $\C\propto\D$, if the following conditions hold.
\begin{enumerate}
\item  \D\ works with the qubits of \C, the \emph{principal qubits} of \D, and may employ additional qubits, \emph{ancillas}, initially in state \ket0.
\item \C\ and \D\ are commensurate under some commensuration correspondence $\zeta$ (which identifies the tracks $f$ of \C\ with some tracks $\zeta(f)$ of \D).
\item For every pure input \ket\psi\ and every track $f$ for \C, circuits \C\ and \D\ realize $f$ with the same probability.
\item For every pure input \ket\psi\ and every track $f$ for \C,  the computations of \C\ and \D\ determined by $f$ compute same output when the ancillas of \D\ are traced out. \qef
\end{enumerate}
\end{definition}

\begin{corollary}
Suppose that \D\ faithfully simulates \C. Then every track $g$ of \D\ which does not have the form $\zeta(f)$ for any track $f$ of \C\ is of probability zero for any input state of \D.
\end{corollary}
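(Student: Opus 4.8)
The plan is to exploit the elementary fact that the probabilities of all tracks of a circuit sum to $1$. By Theorem~\ref{thm:qred}, the probability that \D\ realizes a track $g$ on an input $\rho$ equals the probability of outcome $g$ in $\M(\D)$, namely $\Tr(C^g_\D\,\rho\,(C^g_\D)\dg)/\Tr(\rho)$, where $C^g_\D$ is the operator associated with $g$ in $\M(\D)$; and since $\M(\D)$ is a genuine measurement, its operators satisfy $\sum_g (C^g_\D)\dg C^g_\D = \Id$, so these probabilities sum to $1$ for every $\rho$. The same holds for \C. Because the commensuration correspondence $\zeta$ is a bijection between the measurement gates of \C\ and those of \D, a track $g$ of \D\ amounts to a choice, for each measurement gate $M$ of \C, of an outcome $g(\zeta(M))$ of $\zeta(M)$ (coherence being automatic under Constraint~\ref{cns:channel}). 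Here $g=\zeta(f)$ precisely when every $g(\zeta(M))$ is one of the \emph{original} outcomes of $M$, so that $f(M):=g(\zeta(M))$ defines a track of \C; the tracks excluded by the corollary are exactly those assigning some extra outcome to some $\zeta(M)$. In particular distinct $f$ yield distinct $\zeta(f)$.

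First I would settle the case of a normalized pure input $\ket\psi$, with the ancillas of \D\ in state $\ket0$ as required by Definition~\ref{def:fs}(1). Condition~(3) of that definition gives, for each track $f$ of \C, that \D\ realizes $\zeta(f)$ with the same probability that \C\ realizes $f$. Summing over all $f$ and using that the tracks of \C\ exhaust its probability (indeed $\sum_f \Tr(C^f_\C\,\ketbra\psi\psi\,(C^f_\C)\dg)=1$, since $\M(\C)$ is a measurement), I find that the tracks of the form $\zeta(f)$ carry total probability $1$ in \D. But the probabilities of \emph{all} tracks of \D\ also sum to $1$, and the $\zeta(f)$ are distinct tracks; hence the remaining tracks, which are exactly the excluded ones, carry total probability $0$, and by nonnegativity each excluded track $g$ has probability $0$ on input $\ket\psi$.

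The hard part is upgrading ``probability $0$ on every pure input'' to ``probability $0$ on every input state,'' since condition~(3) speaks only of pure inputs. Here I would argue at the level of the cumulative operator: by Theorem~\ref{thm:qred} the probability of an excluded track $g$ on the pure input $\ket\psi\ox\ket0$ is $\bra{\psi,0}(C^g_\D)\dg C^g_\D\ket{\psi,0}/\braket{\psi,0}{\psi,0}$, and the vanishing of this nonnegative quantity for every $\ket\psi$ forces $C^g_\D(\ket\psi\ox\ket0)=0$ for all $\ket\psi$, i.e.\ $C^g_\D$ annihilates the whole input subspace $\H\ox\ket0$. Then for an arbitrary (possibly mixed) input $\rho$ on the principal qubits, writing the full input state $\rho\ox\ketbra00$ as a combination $\sum_k p_k\,\ketbra{\psi_k,0}{\psi_k,0}$ of pure states in $\H\ox\ket0$, applying $C^g_\D$ kills every term, so $\Tr\!\big(C^g_\D(\rho\ox\ketbra00)(C^g_\D)\dg\big)=0$ and $g$ has probability $0$. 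The only delicate bookkeeping is checking that the $\zeta(f)$ are genuinely distinct tracks and that a track fails to be some $\zeta(f)$ exactly when it uses an extra outcome; granting that, both the summation and the positivity argument are routine.
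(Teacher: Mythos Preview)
Your proof is correct and rests on exactly the same idea as the paper's one-line argument: by item~(3) of Definition~\ref{def:fs}, the probabilities of the tracks $\zeta(f)$ already sum to $1$, so the remaining tracks must have probability zero. You are more explicit than the paper in extending from pure inputs (all that item~(3) literally provides) to arbitrary mixed inputs via the spectral decomposition and linearity of $\rho\mapsto\Tr(C^g_\D\,\rho\,(C^g_\D)\dg)$; the paper leaves that routine step implicit.
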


\begin{proof}
By item~(3) of Definition~\ref{def:fs}, the probabilities of the tracks $\zeta(f)$ add up to 1.
\end{proof}

Recall that a measurement over a Hilbert space \H\ is projective if it consists of mutually orthogonal projection operators. We call a projective measurement $P = \iset{P_i: i\in I}$ \emph{complete} if every $P_i$ projects \H\ to a one-dimensional space $\H_i$, and we call  $P$ \emph{standard} if each $\H_i$ is spanned by a single computational basis vector.

\begin{lemma}\label{lem:dm}
Every quantum circuit \C\ with only standard measurements is faithfully simulated by a quantum circuit \D\ such that \D\ satisfies the deferral requirement.
\end{lemma}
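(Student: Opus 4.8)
The plan is to build $\D$ by the standard ``copy-and-defer'' construction and then verify the four clauses of Definition~\ref{def:fs} branch by branch. Since $\C$ has only standard measurements, each measurement gate $G$ measures its entry register $R_G$ (say of width $k_G$) in the computational basis, so the operators of $M(G)$ are the projectors $\ketbra{i}{i}$ for $i\in\set{0,1}^{k_G}$. In $\D$ I would introduce, for each such $G$, a fresh block of $k_G$ ancilla wires $a_G$ initialized to $\ket0$, and replace $G$ by a \emph{unitary} copy gate $\mathrm{Copy}_G$ applying a CNOT from each wire of $R_G$ to the corresponding wire of $a_G$. All non-CC unitary gates of $\C$ are kept verbatim. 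Each classically controlled unitary gate $G'$, whose selection function chooses a unitary from the outcomes of its measurement sources $G_1,\dots,G_m$, is replaced by the single \emph{unitary} gate $\sum_{i}\ketbra{i}{i}_{a}\ox\sigma_{G'}(i)$ that is quantum-controlled, in the computational basis, by the ancilla wires $a=a_{G_1}\cdots a_{G_m}$ (which pass through unchanged) and acts on $G'$'s target register; the classical channels into $G'$ are deleted and replaced by quantum bindings routing the $a_{G_j}$ into this gate. Finally, after everything, I append for each $G$ a terminal standard measurement gate $G^\sharp$ measuring $a_G$ in the computational basis, sending its exits to (ancilla) output nodes. The commensuration $\zeta$ sends $M(G)$ to $M(G^\sharp)$; both are standard measurements with the same index set $\set{0,1}^{k_G}$, so clause~(2) holds and each track $f$ of $\C$ is identified with the track $\zeta(f)$ that assigns $f(G)$ to $G^\sharp$.

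Next I would check that $\D$ is a legitimate quantum circuit meeting the deferral requirement. The new ancilla wires only create fresh quantum bindings respecting the order of $\C$ with the $G^\sharp$ appended last, so the source relation stays acyclic. The copy gates and the controlled gates are unitary (the $\ketbra{i}{i}$ are orthogonal projectors summing to $\Id$, and each block $\sigma_{G'}(i)$ is unitary), the only measurement gates of $\D$ are the terminal $G^\sharp$, and no gate has any $G^\sharp$ as a prerequisite. Hence no unitary gate has a measurement gate as a prerequisite, which is Requirement~\ref{req:d}.

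The verification of clauses~(3) and~(4) is where the real work lies, and I would do it branch by branch by a commutation (``sliding'') argument. Fix a pure input $\ket\psi$ and a track $f$. By Theorem~\ref{thm:qeq} I may compute $C_\C^f$ and $C_\D^{\zeta(f)}$ along a single linear schedule, processing the gates of $\C$ in a fixed coherent order and the matching gates of $\D$ in the same order, with all $G^\sharp$ placed last. Writing $C_\D^{\zeta(f)}=P\circ W$, where $W$ is the product of all unitary gates of $\D$ and $P=\bigox_G\ketbra{f(G)}{f(G)}_{a_G}$ collects the terminal ancilla projectors selected by $\zeta(f)$, I observe that $\ketbra{f(G)}{f(G)}_{a_G}$ is diagonal on $a_G$ and therefore commutes with every gate of $W$ except the one that \emph{writes} $a_G$, namely $\mathrm{Copy}_G$: the copy gates for other measurements and all deferred controlled-unitaries touch $a_G$ either trivially or only as a computational-basis control, hence diagonally. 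So I would slide each $\ketbra{f(G)}{f(G)}_{a_G}$ leftward until it sits immediately after $\mathrm{Copy}_G$. Since $a_G$ starts in $\ket0$, the identity $\ketbra{f(G)}{f(G)}_{a_G}\circ\mathrm{Copy}_G$ restricted to $a_G=\ket0$ equals $\ketbra{f(G)}{f(G)}_{R_G}$ followed by writing $f(G)$ into $a_G$; and once $a_G$ is pinned to $f(G)$, the deferred controlled gate for $G'$ acts as the very operator $\sigma_{G'}(\dots,f(G),\dots)$ selected in $\C$ on track $f$.

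Collecting these reductions yields the exact operator identity $C_\D^{\zeta(f)}\bigl(\ket\psi\ox\ket{0\cdots0}\bigr)=\bigl(C_\C^f\ket\psi\bigr)\ox\ket{f}$, where $\ket f$ is the computational-basis state of the ancillas encoding the outcomes $f(G)$. Taking squared norms gives equal realization probabilities (clause~(3), using normalized pure inputs), and the reduced state on the principal qubits equals $C_\C^f\ketbra\psi\psi(C_\C^f)\dg$ up to the same scalar, so after tracing out the now-classical ancillas the outputs agree (clause~(4)). I expect the main obstacle to be exactly this commutation bookkeeping: making precise that each terminal ancilla projector may be moved back to the moment of copying past all intervening unitaries -- in particular past the deferred controlled-unitaries that read the same ancilla -- and that doing so reproduces, qubit by qubit and branch by branch, the in-place standard measurement of $\C$ together with its classically controlled choices. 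Handling multi-qubit measurements, several classical sources per gate, and the continued quantum evolution of an already-copied qubit (which is precisely why the ancilla, rather than the original wire, must carry the deferred outcome) are the points that will require care.
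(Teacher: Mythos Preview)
Your construction is the same as the paper's---CNOT-copy each measured register to a fresh ancilla, replace every classically controlled unitary by a quantum-controlled one reading those ancillas, and push all standard measurements to the end on the ancilla wires---and your verification via the operator identity $C_\D^{\zeta(f)}\bigl(\ket\psi\ox\ket{0\cdots0}\bigr)=(C_\C^f\ket\psi)\ox\ket{f}$ is correct. The organization, however, differs from the paper's in two respects. First, the paper proceeds by induction on the number of ``red'' unitary gates (those with a measurement prerequisite), deferring measurements past one such gate at a time and checking each step by an explicit state-evolution calculation; you instead build $\D$ globally in one shot and verify correctness by the sliding/commutation argument (terminal ancilla projectors commute with everything except their own copy gate, because computational-basis controls are diagonal). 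Second, the paper first reduces WLOG to single-qubit measurements and eliminates repeated measurements on the same wire, while your argument handles multi-qubit standard measurements and repeated measurements directly, at the cost of introducing an ancilla block for \emph{every} measurement (the paper avoids ancillas for measured wires that do not enter the next unitary). Your route is a bit slicker algebraically and avoids the case analysis; the paper's inductive route is more local and uses fewer ancillas, which is closer in spirit to the informal PDM picture of ``moving'' individual measurements.
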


\begin{proof}
We proceed by induction on the number of (possibly classically controlled) unitary gates having a measurement gate among their prerequisites; call such unitary gates red and call other unitary gates green. If there are no red gates, we are done. Otherwise, (using acyclicity) let $G$ be one of the red gates having no red gates among its prerequisites. To complete the induction, it suffices to show that $G$ can be replaced with green gates by deferring its measurement prerequisites.

Without loss of generality, \C\ has no green gates. Indeed, if there are green gates in \C, schedule them before all measurement gates.
Let $\C'$ be the rest of \C\ (after the green gates). If $\D'$ faithfully simulates $\C'$ then by prefixing $\D'$ with the green gates, we get a circuit \D\ that faithfully simulates \C.

Without loss of generality, every qubit is measured separately in \C.
Indeed, if $M$ is a standard measurement involving registers $R_1,\dots, R_k$, replace $M$ with standard measurements $M_1, \dots, M_k$ on registers $R_1, \dots, R_k$.
Also replace any channel from $M$ to a unitary gate $G$ by channels from $M_1, \dots, M_k$ to $G$.
Adjust the selection function of $G$ so that $G$ works the same with $M_1, \dots, M_k$ as it did with $M$.

As far as classical outcomes are concerned, every standard one-qubit measurement can be viewed as a binary variable.
Further, without loss of generality, different prerequisite measurements of gate $G$ are on different registers; there is no point in measuring the same qubit twice in the same computational basis without any intervening unitary.
(More formally, the second measurement produces the same classical outcome and the same post-measurement state as the first, so it can be deleted, and any classical channel from it to a later gate can be treated as a channel from the first measurement to the same later gate and be handled by the same selection function.)

We illustrate the remainder of the proof on the example where gate $G$ is preceded by four measurements $p,q,r$ and $s$; see Figure~\ref{fig:dm}.
There are channels to $G$ from measurements $q,r$ but not from $p,s$, and $G$ shares registers with $r,s$ but not with $p,q$.
In general, a red gate could have several prerequisites like $p$, several like $q$, several like $r$, and several like $s$. (``Several'' includes the possibility of zero.) All of these can be handled the same way as the $p,q,r$ and $s$ in the example.

In accordance with Remark~\ref{rem:umeas}, $G$ comes with unitary operators $U_{\sigma(q,r)}$ where $q,r\in\{0,1\}$.
Let $\Gsigma$ be the unitary gate computing the transformation $\ket{jklx} \mapsto \ket{j}\ox U_{\sigma(j,k)}\ket{klx}$.
We shall show that the left side \C\ of Figure~\ref{fig:dm} is faithfully simulated by the right side \D\ with deferred measurements.
\begin{figure}[H]
\begin{minipage}{0.4\textwidth}
\begin{align*}
\Qcircuit @C=1em @R=.75em {
  &\ew &\ew &\emeasure{\ms{\sigma(q,r)}}\cwx[3] \\
  &\measureD{\ms p} &\qw\hspace{18pt}\cdots&\hspace{30pt}\cdots \\
  &\measureD{\ms q}&\qw\hspace{18pt}\cdots&\hspace{30pt}\cdots \\
  &\measureD{\ms r}  &\qw &\multigate{3}{G} &\qw \\
  &\measureD{\ms s}  &\qw &\ghost{G} &\qw\\
  &\cdots\cdots           &\hspace{6pt}\cdots
    &\nghost{G} &\cdots \\
  &\qw &\qw               &\ghost{G} &\qw }
\end{align*}
\end{minipage}
\begin{minipage}{0.1\textwidth}
\begin{center}
\vspace{10pt} $\propto$
\end{center}
\end{minipage}
\begin{minipage}{0.4\textwidth}
\vspace{40pt}
\begin{align*}
\Qcircuit @C=1em @R=.75em {
  &\qw &\qw &\qw &\measureD{\ms p} \\
  &\qw &\qw &\multigate{4}{\Gsigma} &\measureD{\ms q} \\
  &\ctrl{4} &\qw &\ghost{\Gsigma} &\qw \\
  &\qw &\ctrl{4} &\ghost{\Gsigma} &\qw \\
  &\cdots\cdots\quad &\nghost{\Gsigma} &\hspace{40pt}\cdots \\
  &\qw &\qw &\ghost{\Gsigma} &\qw \\
\lstick{\ket0} &\targ &\qw &\qw &\measureD{\ms r} \\
\lstick{\ket0} &\qw &\targ &\qw &\measureD{\ms s}
}
\end{align*}
\end{minipage}
\caption{\small Deferring measurements past one unitary gate}\label{fig:dm}
\end{figure}
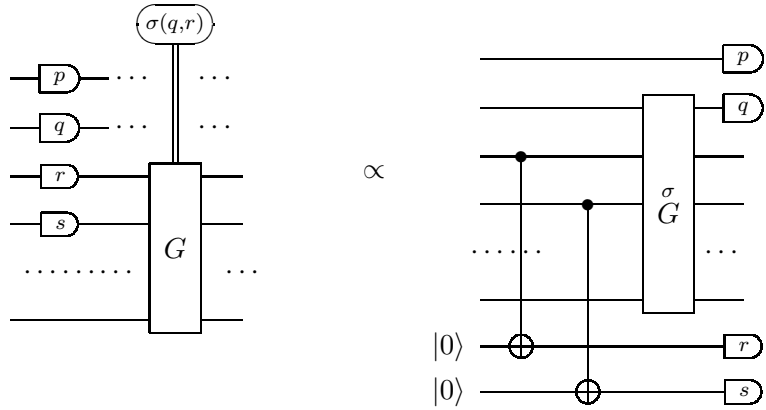

\smallskip
Indeed, consider any pure initial state for \C, say $\sum_{ijkl} a_{ijkl}\ket{ij} \ox \ket{kl x_{ijkl}}$.
In \C, the state evolution is
\[\sum_{ijkl} a_{ijkl}\ket{ij} \ox \ket{kl x_{ijkl}} \to
  \ket{pq}\ox\ket{rs x_{pqrs}} \to
   \ket{pq}\ox U_{\sigma(q,r)}\ket{rs x_{pqrs}}.
\]
In \D, the state evolution is
\begin{align*}
&\sum_{ijkl} a_{ijkl}\ket{ij}\ox\ket{kl x_{ijkl}}\ox\ket{00}
 \to \sum_{ijkl} a_{ijkl}\ket{ij}\ox\ket{kl x_{ijkl}}\ox\ket{kl}\to\\
&\sum_{ijkl} a_{ijkl}\ket{ij}\ox
  \left(U_{\sigma(j,k)}\ket{kl x_{ijkl}}\right) \ox\ket{kl} \to
\ket{pq}\ox\left(U_{\sigma(q,r)}\ket{rsx_{pqrs}}\right)\ox\ket{rs}.
\end{align*}
It follows that \D\ faithfully simulates \C.
\end{proof}

\begin{remark}
It is common that, as in Figure~\ref{fig:cnot} and in the teleportation examples, every classically-controlled unitary gate $G$ comes with a single unitary $U$ and the selector function of $G$ has only two possible values: $U$ and the identity.
This restriction on the selector functions is consistent with the PDM discussion in \cite{NC},
but we do not impose it. \qef
\end{remark}

\begin{theorem}\label{thm:dm}
Every quantum circuit \C\ is faithfully simulated by a quantum circuit \D\ such that \D\ satisfies the deferral requirement.
\end{theorem}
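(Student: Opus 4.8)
The plan is to reduce the general case to Lemma~\ref{lem:dm} by first replacing every measurement of \C\ with an equivalent combination of a unitary gate and a \emph{standard} measurement. This yields an intermediate circuit \C' that uses only standard measurements and faithfully simulates \C; Lemma~\ref{lem:dm} then supplies a deferred version \D\ of \C', and transitivity of faithful simulation finishes the argument. Recall that in this section every circuit obeys Constraint~\ref{cns:channel}, so each measurement gate is non-CC and carries a single measurement, which keeps the replacement clean.

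First I would record the dilation of a single measurement gate $G$ with $M(G) = \iset{A_i : i\in I}$. Introduce enough fresh ancilla qubits to label the outcomes in $I$ by computational basis states $\ket i$, initialized to \ket0. Since $\sum_{i} A_i\dg A_i = \Id$, the assignment $\ket\psi\ox\ket0 \mapsto \sum_{i\in I}(A_i\ket\psi)\ox\ket i$ is an isometry on the subspace with the ancilla in state \ket0, so it extends to a unitary $U_G$ on the whole space. I would let the replacement of $G$ consist of the unitary gate $U_G$ followed by a standard measurement of the ancilla qubits, and I would reroute every classical channel leaving $G$ so that it leaves this standard measurement instead; its outcome $i$ carries exactly the information that the $G$-outcome did. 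Measuring the ancilla in the computational basis yields outcome $i$ with post-measurement state $(A_i\ket\psi)\ox\ket i$, whose principal part, after tracing out the ancilla, is $A_i\ketbra\psi\psi A_i\dg$ --- precisely the outcome and post-measurement state of $M(G)$. Basis states of the ancilla outside $\set{\ket i : i\in I}$ carry zero amplitude, hence correspond to extra outcomes of probability zero.

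Next I would perform this replacement simultaneously for every measurement gate of \C, obtaining a circuit \C' all of whose measurements are standard. The commensuration correspondence $\zeta$ sends each $M(G)$ to the standard measurement introduced in its place; its outcome set contains $I$ together with the probability-zero extra outcomes just noted. Checking Definition~\ref{def:fs} is then routine: the qubits of \C\ are the principal qubits of \C', tracks correspond under $\zeta$, and by the single-gate computation above the probabilities of corresponding tracks agree and the outputs agree once the ancillas are traced out, so $\C\propto\C'$. Now Lemma~\ref{lem:dm} applies to \C', giving $\C'\propto\D$ with \D\ satisfying the deferral requirement, and it remains to observe that $\propto$ is transitive: composing the two commensuration correspondences and taking the union of the two families of ancillas, the probabilities of corresponding tracks match by composing the per-step matches, and tracing out the ancillas of \D\ in two stages (first those added over \C', then those added over \C) recovers the output of \C. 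Thus $\C\propto\D$.

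The main obstacle is the dilation step, and specifically the bookkeeping around it rather than the linear algebra. I must verify not only that the isometry extends to a genuine unitary gate, but that rerouting each classical channel to the new standard measurement leaves the selection functions of the downstream classically controlled gates unchanged, so that coherence of tracks is preserved and the structure remains a legitimate quantum circuit under Definition~\ref{def:qcrct}. The transitivity of $\propto$ is conceptually easy but deserves a careful statement, since the ancillas accumulate across the two reductions and the two traces must be taken in the right order.
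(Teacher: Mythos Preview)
Your proposal is correct and follows essentially the same approach as the paper: reduce to standard measurements via a Naimark-type dilation (unitary plus standard measurement on fresh ancillas, with classical channels rerouted to the new measurement), then invoke Lemma~\ref{lem:dm} and transitivity of $\propto$. The only cosmetic difference is that the paper handles nonstandard measurements one at a time by induction on their number, whereas you replace all measurement gates simultaneously; your account of extending the isometry to a unitary and of the transitivity of faithful simulation is in fact slightly more explicit than the paper's.
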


\begin{proof}
Induction on the number of nonstandard measurements in \C. If the number is zero, use Lemma~\ref{lem:dm}.
Otherwise let $M = \iset{A_i: i\in I}$ be a nonstandard measurement in \C.

Without loss of generality, we may assume that $I$ is an initial segment \set{0,1, \dots, |I|-1} of natural numbers.
Let $l = \lceil \log_2|I|\rceil$.
Expand \C\ with ancilla qubits $\alpha_0, \dots, \alpha_{l-1}$ initially in state \ket0.
For each natural number $k<l$, let $\H_k$ be the state space of $\alpha_k$.
The $2^l$ vectors \ket{a_0\dots a_{l-1}}, where each $a_k$ is 0 or 1, form the computational basis of $\bigox_{k=0}^{l-1} \H_k$.
For each natural number $i<2^l$ let \ket{i} be the vector \ket{a_0\dots a_{l-1}} such that $a_0\dots a_{l-1}$ is the binary representation of $i$.

Let $\H = (\Co^2)^{\ox Inputs(\C)}$ and define a transformation
$U\ket\psi = \sum_{i<|I|} (A_i\ket\psi \ox \ket{i})$ from\ \H\ to $\H\ox \bigox_i \H_i$ which is easily seen to be unitary.
For each $i<2^l$, let $P_i$ be the projection $\ketbra ii$, and let $P$ be the standard projective measurement \iset{P_i: i<2^l} on $\bigox_i \H_i$. Recall that, according to Convention~\ref{cnv:counts}, $P$ counts as a measurement on $\H \ox \bigox_i \H_i$.

Furthermore, consider the following two events: outcome $i\in I$ of $M$ in state \ket\psi\ and outcome $i$ of $P$ in state $U\ket\psi$.
It is easy to see that the two events have the same probability $p_i$, and if the post-measurement state of $M$ is $\frac{A_i\ket\psi}{\sqrt{p_i}}$, then the post-measurement state of $P$ is $\frac{A_i\ket\psi}{\sqrt{p_i}} \ox \ket{i}$.

Let $\C'$ be the circuit obtained from the expanded $\C$ by replacing $M$ with $U$ followed by $P$; every channel from $M$ to a unitary gate $G$ becomes a channel from $P$ to $G$.
Then $\C$ is commensurate with $\C'$; the commensuration correspondence $\zeta$ sends $M$ to $P$ and is the identity elsewhere.

$\C'$ faithfully simulates \C\ and has fewer nonstandard gates.
By the induction hypothesis, there is a circuit \D\ satisfying the deferral requirement and faithfully simulating $\C'$. It also faithfully simulates \C.
\end{proof}

\appendix
\section{Combinatorics}
\label{sec:A}

For the reader's convenience, we reproduce here Appendix~A of \cite{G242}.

Call a linear order $<$ on a poset (partially ordered set) $\mathcal S = (S,\prec)$ \emph{coherent} if $a<b$ whenever $a\prec b$.

A linear order $<$ on a finite set $S$ can be transformed into any other linear order $<'$ on $S$ by adjacent transpositions. In other words, there is a sequence $<_1$, $<_2, \dots, <_k$ of linear orders such that $<_1$ is $<$, and $<_k$ is $<'$, and every $<_{i+1}$ is obtained from $<_i$ by transposing one pair of adjacent elements of $<_i$.
The question arises whether, if $<$ and $<'$ are coherent with a partial order $\prec$, the intermediate orders $<_i$ in the transposition sequence can also be taken to be coherent with $\prec$. The following theorem answers this question affirmatively.

\begin{theorem}\label{thm:comb}
Any coherent linear order on a finite poset $\mathcal S = (S,\prec)$ can be transformed into any other coherent linear order on $\mathcal S$ by adjacent transpositions with all intermediate orders being coherent.
\end{theorem}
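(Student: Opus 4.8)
The plan is to induct on the cardinality $n = |S|$; the base case $n \le 1$ is trivial, as there is only one linear order. For the inductive step, let $<$ and $<'$ be two coherent linear orders on $\mathcal S$, and let $m$ be the $<'$-least element of $S$. Because $<'$ is coherent, $m$ is a minimal element of $(S,\prec)$: if some $y \prec m$, then coherence of $<'$ would force $y <' m$, contradicting the choice of $m$. My strategy is to first maneuver $<$ into an order beginning with $m$, then strip $m$ off and recurse.

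The crux is the following observation: any element that precedes $m$ in a coherent linear order must be $\prec$-incomparable to $m$, so $m$ can be bubbled to the front by allowed adjacent transpositions. Indeed, suppose $y$ precedes $m$ in some coherent order. Then $m \prec y$ is impossible, since coherence would force $m$ to precede $y$; and $y \prec m$ is impossible, since $m$ is minimal. Hence $y$ and $m$ are incomparable, so transposing them yields another coherent order. Applying this repeatedly to $<$, I would move $m$ leftward past each element to its left, one adjacent transposition at a time, with every intermediate order coherent, until $m$ sits first. Call the resulting order $<_0$.

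Now $<_0$ and $<'$ both begin with $m$. I would delete $m$: the restriction of $\prec$ to $S' = S \setminus \set m$ is a poset, and the restrictions of $<_0$ and $<'$ to $S'$ are coherent linear orders on it. By the induction hypothesis, there is a sequence of adjacent transpositions, through coherent orders of $S'$, carrying the restriction of $<_0$ to that of $<'$. Since $m$ sits first and is never touched, each such transposition lifts to an adjacent transposition of the full order on $S$: adjacency of two non-$m$ elements is preserved (everything past position one keeps its relative arrangement), incomparability of the swapped pair is the same in $S'$ as in $S$, and coherence of the full order reduces to coherence on $S'$ plus the already-established minimality of $m$. Composing the bubble-to-front moves with these lifted moves transforms $<$ into $<'$, completing the induction.

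I expect the main obstacle to be isolating and justifying the bubbling observation of the second paragraph, namely that every predecessor of a minimal element in a coherent order is incomparable to it; this is what guarantees the front-loading moves are all legitimate. Once that is in hand, the reduction to a poset with one fewer element and the lifting of the recursive transpositions are routine bookkeeping.
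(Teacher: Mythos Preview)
Your argument is correct. It differs from the paper's proof, which inducts not on $|S|$ but on the \emph{distance} $D(<,<')$, the number of pairs ordered oppositely by $<$ and $<'$. The paper observes that any such pair must be $\prec$-incomparable, shows that at least one differentiating pair is adjacent in $<$, transposes that pair to drop the distance by one, and recurses. Your approach is a selection-sort variant: pick the $<'$-minimum, bubble it to the front of $<$, strip it off, and recurse on the smaller poset. Both arguments hinge on the same elementary fact (a pair ordered differently by two coherent extensions is $\prec$-incomparable), and both in fact use exactly $D(<,<')$ transpositions, which is optimal. The paper's route avoids the bookkeeping of restricting to $S\setminus\{m\}$ and lifting transpositions back; yours has the virtue of reducing to a strictly smaller instance of the same problem, which some readers may find more transparent.
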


\begin{proof}
Fix a finite poset $(S,\prec)$. We start with an observation that if two elements $u,v$ are ordered differently by two coherent linear orders then $u,v$ are incomparable by $\prec$. Indeed, if $u,v$ were comparable then one of the two linear orders would not be coherent.

Define the distance $D(<,<')$ between two coherent linear orders $<$ and $<'$ to be the number of $(<,<')$ differentiating pairs $u,v$ such that $u<v$ but $v<'u$. We claim that if $D(<,<')\ge1$ then there is a  $(<,<')$ differentiating pair $u,v$ such that $u,v$ are adjacent in ordering $<$. It suffices to prove that if $u,v$ is a $(<,<')$ differentiating pair and $u<w<v$ then either $u,w$ or $w,v$ is a  $(<,<')$ differentiating pair, so that $w<'u$ or $v<'w$. But this is obvious. If $u<'w<'v$ then $u<'v$ which is false.

We prove the theorem by induction on the distance $D(<,<')$ between two given coherent linear orders $<$ and $<'$.
If $D(<,<')=0$, the two orders are identical and there is nothing to prove. Suppose $D(<,<')=d\ge1$.

By the claim above there exist $u<v$ such that $u,v$ are adjacent in $<$ but $v<'u$.
By the observation above, $u,v$ are incomparable by $\prec$. Let $<''$ be the order obtained from $<$ by transposing the adjacent elements $u$ and $v$. $<''$ is coherent because $u,v$ is the only $(<,<'')$ differentiating pair and because $u,v$ are incomparable by $\prec$.

It remains to prove that $<''$ can be transformed into $<'$  by adjacent transpositions with all intermediate linear orders respecting $\prec$. But this follows from the induction hypothesis. Indeed, $D(<'',<')=d-1$ because the $(<'',<')$ differentiating pairs are the same as the $(<,<')$ differentiating pairs, except for $u,v$.
\end{proof}


\begin{thebibliography}{99}


\bibitem{AKN}
Dorit Aharonov, Alexei Kitaev, and Noam Nisan,
``Quantum circuits with mixed states,''
Proceedings of STOC’97, El Paso, TX, 20-–30,
and arXiv:quant-ph/9806029v1

\bibitem{Al-Rabadi}
Anas N. Al-Rabadi,
``Reversible logic synthesis: From fundamentals to quantum computing,''
Springer Berlin, 2014


\bibitem{G242}
Andreas Blass and Yuri Gurevich,
``Circuit pedantry,"
\emph{Bulletin of EATCS} 129 (2019),
and arXiv:1910.06145v1

\bibitem{G244}
Andreas Blass and Yuri Gurevich,
``Circuits: An abstract viewpoint,"
\emph{Bulletin of EATCS} 130 (2020),
and arXiv:2006.09488v1


\bibitem{BW}
Lukas Burgholzer and Robert Wille,
``Towards verification of dynamic quantum circuits,''
arXiv:2106.01099v1

\bibitem{De Vos}
Alexis De Vos,
``Reversible computing: Fundamentals, quantum computing, and applications,''
Wiley-VCH Weinheim, 2010

\bibitem{FR}
Bill Fefferman and Zachary Remscrim,
``Eliminating intermediate measurements in space-bounded quantum
computation,''
arXiv:2006.03530v2

\bibitem{GC}
Juan Carlos Garcia-Escartin and Pedro Chamorro-Posada,
``Equivalent quantum circuits,''
arXiv:1110.2998v1

\bibitem{GR}
Uma Girish and Ran Raz,
``Eliminating intermediate measurements using pseudorandom generators,''
arXiv:2106.11877v1

\bibitem{JST}
Joseph Jaeger, Fang Song, and Stefano Tessaro,''
``Quantum key-length extension,''
arXiv:2105.01242v1

\bibitem{MW}
Dieter van Melkebeek and Thomas Watson,
``Time-space efficient simulations of quantum computations,''
Theory of Computing, 8 (2012) 1–-51

\bibitem{NC}
Michael A.\ Nielsen and Isaac L.\ Chuang,
``Quantum computation and quantum information,"
Cambridge University Press,
10th Anniversary Edition, 2010

\bibitem{PJ}
Simon Perdrix and Philippe Jorrand,
``Classically-controlled quantum computation,''
\emph{Electronic Notes in Theoretical Computer Science}
135 (2006) 119--128

%

\bibitem{Tao+5}
Runzhou Tao, Yunong Shi, Yianan Yao, John Hui, Frederic T. Chong, Ronghui Gu,
``Gleipnir: toward practical error analysis for
quantum programs (Extended version),''
arXiv:2104.06349v2

\bibitem{Tarski}
Alfred Tarski,
``Der Wahrheitsbegriff in den formalisierten Sprachen,''
\emph{Studia Philosophica} 1, 261--405, 1936

\bibitem{Watrous}
John Watrous,
``On the complexity of simulating space-bounded quantum computations,''
\emph{Computational Complexity} 12 (2003) 48--84

\bibitem{WikiDM}
Wikipedia contributors,
``Deferred measurement principle,''
Page Version ID: 948991742 \url{https://en.wikipedia.org/w/index.php?&oldid=948991742}

\bibitem{WikiQC}
Wikipedia contributors,
``Quantum circuit,''
Page Version ID: 1023439371, \url{https://en.wikipedia.org/w/index.php?&oldid=1023439371}

\bibitem{ZBL}
Oded Zilberberg, Bernd Braunecker, and Daniel Loss,  ``Controlled-NOT gate for multiparticle qubits and topological quantum computation based on parity measurements,'' \emph{Physical Review A} 77 012327 2008


\end{thebibliography}
\end{document}